\font\msbm=msbm10 at 12pt
\newcommand{\ZZ}{\mbox{\msbm Z}}
\newcommand{\NN}{\mbox{\msbm N}}
\newcommand{\FF}{\mbox{\msbm F}}
\def \Z {{\ZZ}}
\def \N {{\NN}}
\def \F {{\FF}}
\newtheorem{theorem}{Theorem}
\newtheorem{lemma}[theorem]{Lemma}
\newtheorem{remark}[theorem]{Remark}
\newtheorem{definition}[theorem]{Definition}
\begin{document}
%
% paper title
% can use linebreaks \\ within to get better formatting as desired
\title{On Code Rates of Fractional Repetition Codes}
%
%
% author names and IEEE memberships
% note positions of commas and nonbreaking spaces ( ~ ) LaTeX will not break
% a structure at a ~ so this keeps an author's name from being broken across
% two lines.
% use \thanks{} to gain access to the first footnote area
% a separate \thanks must be used for each paragraph as LaTeX2e's \thanks
% was not built to handle multiple paragraphs
%

% \author{Michael~Shell,~\IEEEmembership{Member,~IEEE,}
%         John~Doe,~\IEEEmembership{Fellow,~OSA,}
%         and~Jane~Doe,~\IEEEmembership{Life~Fellow,~IEEE}% <-this % stops a space
% \thanks{M. Shell is with the Department
% of Electrical and Computer Engineering, Georgia Institute of Technology, Atlanta,
% GA, 30332 USA e-mail: (see http://www.michaelshell.org/contact.html).}% <-this % stops a space
% \thanks{J. Doe and J. Doe are with Anonymous University.}% <-this % stops a space
% \thanks{TCOM version based on Michael Shell's bare{\textunderscore}jrnl.tex version 1.3.}}

\author{Krishna~Gopal~Benerjee,
        and~Manish~K~Gupta,~\IEEEmembership{Senior Member,~IEEE}
\thanks{Krishna Gopal Benerjee and Manish K Gupta are with Laboratory of Natural Information Processing, Dhirubhai Ambani Institute of Information and Communication Technology Gandhinagar, Gujarat, 382007, India, e-mail: (krishna\_gopal@daiict.ac.in and mankg@computer.org.).}
}

\maketitle

\begin{abstract}
In \textit{Distributed Storage Systems} (DSSs), usually, data is stored using replicated packets on different chunk servers. 
Recently a new paradigm of \textit{Fractional Repetition} (FR) codes have been introduced, in which, data is replicated in a smart way on distributed servers using a \textit{Maximum Distance Separable} (MDS) code. 
In this work, for a non-uniform FR code, bounds on the FR code rate and DSS code rate are studied. 
Using matrix representation of an FR code, some universally good FR codes have been obtained.
\end{abstract}

% Note that keywords are not normally used for peerreview papers.
\begin{IEEEkeywords}
Distributed Storage Systems, Fractional Repetition Code, Coding for Distributed Storage.
\end{IEEEkeywords}

% For peer review papers, you can put extra information on the cover
% page as needed:
% \ifCLASSOPTIONpeerreview
% \begin{center} \bfseries EDICS Category: 3-BBND \end{center}
% \fi
%
% For peerreview papers, this IEEEtran command inserts a page break and
% creates the second title. It will be ignored for other modes.
\IEEEpeerreviewmaketitle

\section{Introduction}\label{intro}
In the modern world, we generate tremendous amount of data daily and there is a pressing need to store
such a large amount of data in an efficient manner. Many companies such as Google, Microsoft, Dropbox, Amazon, etc. provide solution of data storage using Distributed Storage Systems (DSS). For a typical distributed storage environment \cite{5550492,DARKYC11}, a source file is stored on a large number of unreliable nodes such that each node has some encoded fraction of information of the source file called packets. By injecting redundancy, the system can repair the failed node, optimize the bandwidth and the storage need on each node etc.

Usually such a DSS is described by the parameters $(n,k,d)$, where a source file of size $B$ is divided into some encoded packets (from a finite field $\F_q$) of equal size. 
These encoded packets are distributed on $n$ distinct nodes in a smart way such that any $k$ (called reconstruction degree) nodes have sufficient encoded packets to reconstruct the complete source file $B$.
For reliability, the particular system has to have sufficient information to retrieve the complete file information in the presence of some failed nodes. 
The repair process is either exact (when the lost data is replaced with exact data) or functional (when the lost data is replaced with some function of lost data information). 
For a specific repair of a node failure, $\beta$ (repair traffic) packets are downloaded from each of $d$ (called repair degree) nodes and a new node is reconstructed with the downloaded packets which have the lost information. 
The particular new node will take place of the failed node and the repaired bandwidth $\gamma$ (the total number of packets downloaded to repair the node failure) will be $d\beta$. 
If the encoding of the message packets is done with repetition code then the system has minimum repair bandwidth though it needs more space to store data on a node. On the other hand, if one minimizes the storage space using Maximum Distance Separable (MDS) code on packet symbols then the system needs more repair bandwidth. 
In order to minimize the two conflicting parameters, Dimakis et al. proposed a new coding scheme called \textit{regenerating codes} \cite{5550492}. 
For an ($n,k,d$) DSS, the regenerating code is characterized by the parameters $[n,k,d,\alpha,\beta,B]$ over field $\F_q$, where $\alpha$ packets are stored on node $U_i$ ($\forall\; i\in[n]$\footnote{For $m\in\N,\ [m] = \{1,2,\ldots,m\}$.}). 
In \cite{5550492}, Dimakis et al. calculated min-cut bound for the regenerating codes in the presence of single node failure and plotted trade-off curve between node storage capacity $\alpha$ and repair bandwidth per node $\gamma$. 
\textit{Minimum storage regenerating} code (MSR code) and \textit{minimum bandwidth regenerating} code (MBR code) are obtained by minimizing the both parameters in a different order \cite{5550492}. 
The constructions for these two families are studied in \cite{5550492,DARKYC11,6620424,5961826,RSKR09,5513367} and \cite{5513263}. 
The decoding process requires some computations, so to avoid this, Rashmi et al. proposed a simple construction of MBR codes with uncoded repair on ($n,k,d=n-1$) DSS \cite{RSKR09}. 
In the particular MBR code, $B=kd-{k\choose 2}$, $\alpha$ = $n-1$. The construction is extended to new storage code called \textit{Fractional Repetition} (FR) code \cite{rr10}.

An FR code is a two layer coding, inner repetition code and outer MDS code \cite{rr10}. For an ($n,k,d$) DSS with identical repair traffic $\beta$, FR code can be characterized by the parameters $n,\theta,\alpha$ and $\rho$, where replicas of $\theta$ packets (each replicated $\rho$ times) are stored among $n$ distinct nodes having node storage capacity $\alpha$ each. Construction of an FR code in the work is done by Regular graph and Steiner system \cite{rr10}. The conditions on parameters for the existence of such FR codes are investigated in \cite{DBLP:journals/corr/abs-1201-3547}. Constructions of FR codes defined on ($n,k,d$) DSS are done by bipartite graphs \cite{6120326}, resolvable designs \cite{6483351,7422071}, Kronecker product of two Steiner systems \cite{6810361,7422071}, randomize algorithm \cite{6033980}, incidence matrix \cite{DBLP:journals/corr/abs-1303-6801} are also investigated. In \cite{6620277}, a construction using design theory is given for the FR code which achieves the bound (\cite[Theorem 4.1]{6846301}) on locality of regenerating codes. %In \cite{7555340}, Locally Repairable codes are studied using Matroid Theory.

FR codes on heterogeneous DSS (DSS with some non-uniform parameters) are also investigated by researchers \cite{DBLP:journals/corr/abs-1302-3681,7118709,6804948,iet:/content/journals/10.1049/iet-com.2014.1225} and \cite{6763122}. In \cite{6804948}, \textit{Irregular Fractional Repetition code} (IFR code) is constructed by uniform hypergraph. In particular, IFR code can have distinct node storage capacity $\alpha_i$ ($\forall\; i\in[n]$) and a failed node can be repaired by some $d$ nodes. In \cite{6763122}, \textit{General Fractional Repetition} (GFR) code is constructed by group divisible designs. The specific GFR code is FR code with repair traffic $\beta\leq 1$ and dynamic node storage capacity $\alpha_i$ ($\forall\; i\in[n]$). In \cite{DBLP:journals/corr/abs-1302-3681}, FR code with dynamic repair degree called \textit{Weak Fractional Repetition code} (WFR code) is constructed by partial regular graph. FR code on Batch code is investigated in \cite{7118709}. In \cite{6811237}, \textit{Variable Fractional Repetition} (VFR) code is constructed in which replication factor of an arbitrary packet $P_j$ $(j\in[\theta])$ is either $\rho_1$ or $\rho_2$. In \cite{iet:/content/journals/10.1049/iet-com.2014.1225}, an FR code called \textit{Heterogeneous Fractional Repetitions} (HFR) code, with different replication factor of packets is discussed. Construction of FR code based on sequences is investigated in \cite{7458383} and \cite{DBLP:journals/corr/PrajapatiG16}. \textit{Flexible Fractional Repetitions} (FFR) code is constructed using pairwise balanced design and group divisible design  \cite{7066224}. \textit{Adaptive Fractional Repetitions} (Adaptive FR) code is constructed in \cite{7312417}, where Adaptive FR is an FR code generated by deleting elements from some known FR code. FR code is also constructed using perfect difference families and quasi-perfect difference families \cite{e18120441}. In \cite{7366761}, minimum distance on FR code is defined as the minimum number of nodes are such that those nodes fail simultaneously and the system can not recover the complete file. In \cite{7005805}, FR code is constructed by Turan graphs and combinatorial designs, called Transversal designs and generalized polygons. FR code with smaller repair degree called Locally Repairable Fractional Repetitions (Locally Repairable FR) is discussed in \cite{7458387,7558231}. Algorithms to compute repair degree and reconstruction degree for given packet distribution on nodes of an FR code, are given in \cite{DBLP:journals/corr/abs-1305-4580}. 

\textit{Contribution:} 
In the present work, FR code with non-uniform parameters such as node storage capacity $\alpha_i$ ($i\in[n]$), repair traffic $\beta_i$, repair degree $d_i$ and replication factor $\rho_j$ ($j\in[\theta]$), is considered. 
The \textit{FR code rate} for an FR code is the ratio of file size to the total number of encoded packets stored in the FR code. 
Bounds on the FR code rate, growth of the FR code rate and the DSS code rate are obtained in this work. It is shown that the FR code with FR code rate more than the DSS code rate, does not exist. 
Again it is proved that the reliable FR code with FR code rate more than $0.5$, does not exist. 
A reliable FR code with the maximum code rate $0.50$ has the lowest tolerance factor $1$.

\textit{Organization:} In Section $2$, we introduced FR code and obtained some of its basic properties. 
Some bounds on the DSS code rate and the FR code rate for an FR code is obtained in Section $3$. 
Towards the end of the section, some constructions of the universally good FR codes are also given. 
Section $4$ concludes the paper with some general remarks.

\section{Background on Fractional Repetition Codes}

An FR code is a distribution of replicated packets on nodes such that the system is reliable and the file can be retrieved. Preliminaries of an FR code is collected in the section. 

Consider a file divided into $B$ distinct independent message packets. The symbols associated with the message packets, are defined on a field $\F_q$. 
By utilizing ($\theta$, $B$) MDS code on the message packets, $\theta$ distinct encoded packets $P_1$, $P_2$, $\ldots$, $P_{\theta}$ are generated. 
A packet $P_j$ ($j\in[\theta]$) is replicated $\rho_j$ times and all the replicated packets are distributed among $n$ distinct nodes $U_1$, $U_2$, $\ldots$, $U_n$. Let $\rho$ = $\max\{\rho_j:j\in[\theta]\}$ be the maximum replication factor among all packets. 
If $\alpha_i$ $(\forall \ i\in[n])$ packets are stored on the node $U_i$ then the parameter $\alpha_i$ can be called storage capacity of node $U_i$. 
The maximum and minimum node storage capacity among all nodes can be given by $\alpha$ = $\max\{\alpha_i:i\in[n]\}$ and $\alpha_{min} = \min\{\alpha_i:i\in[n]\}$ respectively. 
An example of such FR code $\mathscr{C}:(n=6,\theta=10,\alpha=4,\rho=3)$ is shown in Table \ref{example}. 
In the table, a file is divided into $7\ (=B)$ message packets and the $7$ message packets are encoded into $10\ (=\theta)$ distinct packets (say $P_1$, $P_2$, $\ldots$, $P_{10}$) by using $(10, 7)$ MDS code. 
For the FR code, the packet $P_8$ is replicated thrice and the remaining packets are replicated twice $i.e.$ $\rho_8=3$ and $\rho_j=2$ ($\forall j\in[10]\backslash\{8\}$). 
All $21$ replicated packets are distributed on $6\ (=n)$ distinct nodes say $U_1$, $U_2$, $\ldots$, $U_6$ such that $U_1$ = $\{P_1,P_2,P_3,P_4\}$, $U_2$ = $\{P_1,P_5,P_6,P_7\}$, $U_3$ = $\{P_2,P_5,P_8,P_9\}$, $U_4$ = $\{P_8,P_6,P_8\}$, $U_5$ = $\{P_4,P_7,P_{10}\}$ and $U_6$ = $\{P_8,P_9,P_{10}\}$. 
In the example, $\alpha_i$ = $|U_i|$ = $4$ ($i=1,2,3$) and $\alpha_i$ = $|U_i|$ = $3$ ($i=4,5,6$). 
Observe that $\rho=\max\{\rho_j:j\in[10]\}=3$, $\alpha=\max\{\alpha_i:i\in[6]\}=4$ and $\alpha_{min}=\min\{\alpha_i:i\in[6]\}=3$.

%\begin{example}
\begin{table}[ht]
\caption{Distribution of replicated packets on nodes for the FR code $\mathscr{C}:(6,10,4,3)$.}
\centering 
\begin{tabular}{|c||c|}
\hline
$U_1$& $P_1$, $P_2$, $P_3$, $P_4$         \\
\hline \hline
$U_2$& $P_1$, $P_5$, $P_6$, $P_7$         \\
\hline \hline
$U_3$& $P_2$, $P_5$, $P_8$, $P_9$         \\
\hline \hline 
$U_4$& $P_3$, $P_6$, $P_8$                \\
\hline \hline
$U_5$& $P_4$, $P_7$, $P_{10}$             \\
\hline \hline
$U_6$& $P_8$, $P_9$, $P_{10}$             \\
\hline  
\end{tabular}
\label{example}
\end{table}
%\end{example}
Using MDS property, a data collector can reconstruct the file by collecting any $B$ distinct packets out of the $\theta$ packets from the system. To download $B$ distinct packets, the data collector has to connect some nodes out of the $n$ nodes. A set of those nodes can be called as \textit{reconstruction set} and the set can be denoted by $\mathcal{A}$. Clearly, $|\mathcal{A}|$ is reconstruction degree and $B\leq |\cup_{U\in\mathcal{A}} U|$. If an arbitrary node has some packets which are not stored in any other node in the reconstruction set $\mathcal{A}$ %In a reconstruction set $\mathcal{A}$, if an arbitrary node has some data fraction which is not stored in another node in $\mathcal{A}$
then the data collector has to download packets from each node of the reconstruction set $\mathcal{A}$.
In other words, for any proper subset $A\subsetneq\mathcal{A}$, if $B>|\cup_{U\in A}U|$ then each node in $\mathcal{A}$ will take active participation during file reconstruction by the data collector. 
In particular, removing one node from the node collection, the resultant collection does not have $B$ distinct packets. 
Note that reconstruction degrees may not be identical for distinct reconstruction sets. 
All possible reconstruction sets for FR code $\mathscr{C}:(6,10,4,3)$ (see Table \ref{example}) are listed in Table \ref{list}. 
For an FR code, if $\eta\in\N$ distinct reconstruction sets exist, then the maximum reconstruction degree will be $k_{max}:=\max\{|\mathcal{A}_i|:i\in[\eta]\}$. 
In the FR code, data collector can get the complete file by downloading packets from any $k_{max}$ nodes. 
In the Table \ref{list}, $k_{max}=3$ so any three nodes in FR code $\mathscr{C}:(6,10,4,3)$ have sufficient packets for the complete file $B$.

\begin{table}[ht]
\caption{Reconstruction sets and respective reconstruction degrees for the FR code $\mathscr{C}:(6,10,4,3)$.}
\centering 
\begin{tabular}{|c|c|}
\hline
Reconstruction Set        & Reconstruction Degree      \\
\hline \hline
$\mathcal{A}_1=\{U_1,U_2\}$   & $2$         \\
\hline 
$\mathcal{A}_2=\{U_1,U_3\}$   & $2$         \\
\hline 
$\mathcal{A}_3=\{U_1,U_4,U_5\}$ & $3$         \\
\hline 
$\mathcal{A}_4=\{U_2,U_3\}$   & $2$         \\
\hline 
$\mathcal{A}_5=\{U_2,U_4,U_5\}$ & $3$         \\
\hline 
$\mathcal{A}_6=\{U_2,U_6\}$   & $2$         \\
\hline 
$\mathcal{A}_7=\{U_3,U_5\}$   & $2$         \\
\hline 
$\mathcal{A}_8=\{U_3,U_4,U_6\}$ & $3$         \\
\hline 
$\mathcal{A}_9=\{U_1,U_6\}$   & $2$         \\
\hline 
\end{tabular}
\label{list}
\end{table}

For an FR code $\mathscr{C}:(n,\theta,\alpha,\rho)$, the system can be repaired by replacing failed node $U_i\ (i\in[n])$ with newly generated node $U_i$'. Packets stored on the new node $U_i$', are downloaded from some remaining non-failure nodes called \textit{helper nodes}. For a node failure $U_i$, set of such helper nodes is called surviving set and denoted by $S_i^{(\ell)}$, where $\ell\leq\prod\limits_{j;P_j\in U_i}(\rho_j-1)$ is the index of surviving sets. Clearly, $U_i\subseteq\cup_{U\in S_i^{(\ell)}}U$ $i.e.$ $\alpha_i\leq |\cup_{U\in S_i^{(\ell)}}U|$. For $U_r\in S_i^{(\ell)}$, if $U_i\cap U_r = \phi$ then $U_r$ does not have copy of packets which are stored in node $U_i$. Hence, for any $U_r\in S_i^{(\ell)}$, $U_i\cap U_r \neq \phi$. Packets which are seared by node $U_r\in S_i^{(\ell)}$ for node $U_i$', can not be seared by any other node in $S_i^{(\ell)}\backslash\{U_r\}$ so $U_i\nsubseteq \bigcup\limits_{U\in S_i^{(\ell)}\backslash\{U_r\}} U$. Mathematically, $S_i^{(\ell)}=\{U_r: r\in I\subset [n]\backslash\{i\}, U_i\cap U_r\neq\phi\}$ such that node $U_i\subseteq \bigcup\limits_{m\in I}U_m$ and $U_i\nsubseteq \bigcup\limits_{m\in S\subset I}U_m$. For example, if the node $U_1$ fails in the FR code $\mathscr{C}:(6,10,4,3)$ (considered in Table \ref{example}) then a new node $U_1'$ is created by downloading packets from nodes $U_2$, $U_3$, $U_4$ and $U_5$. Hence, the set $S_1^{(1)}=\{U_2,U_3,U_4,U_5\}$ is surviving set for node $U_1$. Surviving sets for nodes in the FR code $\mathscr{C}:(6,10,4,3)$ (considered in Table \ref{example}), are given in Table \ref{surviving set}. There may be several surviving sets for a specific node with different cardinality (for example, see Table \ref{surviving set}). Observe, the repair degree of a failed node $U_i$, is the cardinality of surviving set $S_i^{(\ell)}$ which is used for repair process at that time instant. In the FR code $\mathscr{C}:(6,10,4,3)$, the repair degrees of nodes are listed in Table \ref{surviving set}. For a given node $U_i$, the repair degree can be distinct at different time instant. The maximum repair degree for a node failure $U_i$ can be given by $d_i=\max\{|S_i^{(\ell)}|:\ell\leq\prod\limits_{j;P_j\in U_i}(\rho_j-1),\ell\in\Z\}$. For an FR code $\mathscr{C}:(n,\theta,\alpha,\rho)$, let $d$ = $\max\{d_i:i\in[n]\}$ be the maximum repair degree for the FR code.

\begin{table}[ht]
\caption{Surviving sets and repair degrees for nodes in the FR Code $\mathscr{C}:(6,10,4,3)$.}
\centering 
\begin{tabular}{|c||c|c|c|c|}
\hline
Node &Surviving Set                   & Index set     & Repair Degree     & Max. Repair      \\
$U_i$&$S_i^{(\ell)}$                  & $I$           &$|S_i^{(\ell)}|$   & Degree $d_i$ \\
\hline \hline
$U_1$& $S_1^{(1)}=$                   & $\{2,3,4,5\}$ & $4$               & $4$  \\ 
     & $\{U_2,U_3,U_4,U_5\}$          &               &                   &      \\ 
\hline 
$U_2$&$S_2^{(1)}=$                    & $\{1,3,4,5\}$ & $4$               & $4$  \\
     &$\{U_1,U_3,U_4,U_5\}$           &               &                   &      \\
\hline 
$U_3$&$S_3^{(1)}=$                    & $\{1,2,6\}$   & $3$               & $4$  \\ 
     &$\{U_1,U_2,U_6\}$               &               &                   &      \\ \cline{2-4}
     &$S_3^{(2)}=$                    & $\{1,2,4,6\}$ & $4$               &      \\
     &$\{U_1,U_2,U_4,U_6\}$           &               &                   &      \\
\hline 
$U_4$&$S_4^{(1)}=$                    & $\{1,2,3\}$   & $3$               & $3$  \\ 
     &$\{U_1,U_2,U_3\}$               &               &                   &      \\ \cline{2-4}
     &$S_4^{(2)}=$                    & $\{1,2,6\}$   & $3$               &      \\
     &$\{U_1,U_2,U_6\}$               &               &                   &      \\
\hline 
$U_5$&$S_5^{(1)}=$                    & $\{1,2,6\}$   & $3$               & $3$  \\
     &$\{U_1,U_2,U_6\}$               &               &                   &      \\
\hline 
$U_6$&$S_6^{(1)}=\{U_3,U_5\}$         & $\{3,5\}$     & $2$               & $3$  \\ \cline{2-4}
     &$S_6^{(2)}=$                    & $\{3,4,5\}$   & $3$               &      \\
     &$\{U_3,U_4,U_5\}$               &               &                   &      \\
\hline 
\end{tabular}
\label{surviving set}
\end{table}

At a time instant $t$, assume a failed node $U_i$ is repaired by replacing it with newly generated node $U_i$' by using nodes of surviving set $S_i^{(\ell)}$. In particular process, node $U_r\in S_i^{(\ell)}$ downloads $\beta\left(U_i,U_r,S_i^{(\ell)}\right)>0$ number of packets and stores on node $U_i$'. For a node failure in the FR code $\mathscr{C}:(6,10,4,3)$ (as considered in Table \ref{example}), the downloaded packets are listed in the Table \ref{beta}.

\begin{table}
\caption{Repair of each failed node $U_i$ in the FR code $\mathscr{C}:(6,10,4,3)$.}
\centering 
\begin{tabular}{|c||c|c|c|c|}
\hline
Node&Surviving                     & Helper    & Downloaded      & Number of Packets    \\
$U_i$& Set $S_i^{(\ell)}$          & Node $U_r$& Packets         & Downloaded by $U_r$ \\
\hline \hline 
     &                &  $U_2$        &$P_1$      & $\beta\left(U_1,U_2,S_1^{(1)}\right)=1$  \\ \cline{3-5}
$U_1$&   $S_1^{(1)}$  &  $U_3$        &$P_2$      & $\beta\left(U_1,U_3,S_1^{(1)}\right)=1$  \\ \cline{3-5}
     &                &  $U_4$        &$P_3$      & $\beta\left(U_1,U_4,S_1^{(1)}\right)=1$  \\ \cline{3-5}
     &                &  $U_5$        &$P_4$      & $\beta\left(U_1,U_5,S_1^{(1)}\right)=1$  \\ 
\hline \hline  
     &                &  $U_1$        &$P_1$      & $\beta\left(U_2,U_1,S_2^{(1)}\right)=1$  \\ \cline{3-5}
$U_2$&   $S_2^{(1)}$  &  $U_3$        &$P_5$      & $\beta\left(U_2,U_3,S_2^{(1)}\right)=1$  \\ \cline{3-5}
     &                &  $U_4$        &$P_6$      & $\beta\left(U_2,U_4,S_2^{(1)}\right)=1$  \\ \cline{3-5}
     &                &  $U_5$        &$P_7$      & $\beta\left(U_2,U_5,S_2^{(1)}\right)=1$  \\ 
\hline  \hline 
     &                &  $U_1$        &$P_2$      & $\beta\left(U_3,U_1,S_3^{(1)}\right)=1$  \\ \cline{3-5}
$U_3$&$S_3^{(1)}$     &  $U_2$        &$P_5$      & $\beta\left(U_3,U_2,S_3^{(1)}\right)=1$  \\ \cline{3-5}
     &                &  $U_6$        &$P_8$, $P_9$& $\beta\left(U_3,U_6,S_3^{(1)}\right)=2$  \\ \cline{2-5}
     &                &  $U_1$        &$P_2$      & $\beta\left(U_3,U_1,S_3^{(2)}\right)=1$  \\ \cline{3-5}
       &$S_3^{(2)}$     &  $U_2$        &$P_5$      & $\beta\left(U_3,U_2,S_3^{(2)}\right)=1$  \\ \cline{3-5}
     &                &  $U_4$        &$P_8$      & $\beta\left(U_3,U_4,S_3^{(2)}\right)=1$  \\ \cline{3-5}
     &                &  $U_6$        &$P_9$      & $\beta\left(U_3,U_6,S_3^{(2)}\right)=1$  \\ 
\hline  \hline 
     &                & $U_1$         &$P_3$      & $\beta\left(U_4,U_1,S_4^{(1)}\right)=1$  \\ \cline{3-5}
$U_4$&$S_4^{(1)}$     & $U_2$         &$P_6$      & $\beta\left(U_4,U_2,S_4^{(1)}\right)=1$  \\ \cline{3-5}
     &                & $U_3$         &$P_8$      & $\beta\left(U_4,U_3,S_4^{(1)}\right)=1$  \\ \cline{2-5}
     &                & $U_1$         &$P_3$      & $\beta\left(U_4,U_1,S_4^{(2)}\right)=1$  \\ \cline{3-5}
         &$S_4^{(2)}$     & $U_2$         &$P_6$      & $\beta\left(U_4,U_2,S_4^{(2)}\right)=1$  \\ \cline{3-5}
     &                & $U_6$         &$P_8$      & $\beta\left(U_4,U_6,S_4^{(2)}\right)=1$  \\ 

\hline  \hline 
     &                & $U_1$         &$P_4$      & $\beta\left(U_5,U_1,S_5^{(1)}\right)=1$  \\ \cline{3-5}
$U_5$&$S_5^{(1)}$     & $U_2$         &$P_7$      & $\beta\left(U_5,U_2,S_5^{(1)}\right)=1$  \\ \cline{3-5}
     &                & $U_6$         &$P_{10}$   & $\beta\left(U_5,U_6,S_5^{(1)}\right)=1$  \\ 
\hline  \hline 
     &$S_6^{(1)}$     & $U_3$         &$P_8$, $P_9$& $\beta\left(U_6,U_3,S_6^{(1)}\right)=2$  \\ \cline{3-5}
     &                & $U_5$         &$P_{10}$   & $\beta\left(U_6,U_5,S_6^{(1)}\right)=1$  \\ \cline{2-5}
$U_6$&                & $U_3$         &$P_9$      & $\beta\left(U_6,U_3,S_6^{(2)}\right)=1$  \\ \cline{3-5}
     &$S_6^{(2)}$     & $U_4$         &$P_8$      & $\beta\left(U_6,U_4,S_6^{(2)}\right)=1$  \\ \cline{3-5}
     &                & $U_5$         &$P_{10}$   & $\beta\left(U_6,U_5,S_6^{(2)}\right)=1$  \\ 
\hline 
\end{tabular}
\label{beta}
\end{table}

Now the \textit{repair bandwidth} of a failed node $U_i$, is the total number of packets downloaded by each helper nodes of a surviving set $S_i^{(\ell)}$ during repair process. Mathematically, the repair bandwidth for a failed node $U_i$ is given by 
\begin{equation*}
\gamma\left(U_i, S_i^{(\ell)}\right)=\sum_{\stackrel{r}{U_r\in S_i^{(\ell)}}}\beta\left(U_i, U_r, S_i^{(\ell)}\right).
\end{equation*}
For a node failure, the repair bandwidths which are associated with different surviving sets may not be same.
For the FR code $\mathscr{C}:(6,10,4,3)$, one can observe $\gamma\left(U_1, S_1^{(1)}\right)$ = $\gamma\left(U_2, S_2^{(1)}\right)$ = $\gamma\left(U_3, S_3^{(1)}\right)$ = $\gamma\left(U_3, S_3^{(2)}\right)$ = $4$ and $\gamma\left(U_4, S_4^{(1)}\right)$ = $\gamma\left(U_4, S_4^{(2)}\right)$ = $\gamma\left(U_5, S_5^{(1)}\right)$ = $\gamma\left(U_6, S_6^{(1)}\right)$ = $\gamma\left(U_6, S_6^{(2)}\right)$ = $3$ by using the Table \ref{beta}. Note that an FR code $\mathscr{C}:(n,\theta,\alpha,\rho)$ can tolerate $\min\{\rho_j:j\in[\theta]\}-1$ node failure simultaneously.

Now, one can define an FR code as follows.

\begin{definition} (Fractional Repetition Code): 
 On a Distributed Storage System with $n$ nodes denoted by $U_i\ (i \in[n])$ and $\theta$ packets denoted by $P_j\ ( j \in [\theta])$, one can define an FR code $\mathscr{C}(n, \theta, \alpha, \rho)$ as a collection $\mathscr{C}$ of $n$ subsets $U_i$ ($i\in[n]$) of a set $\{P_j:j\in[\theta]\}$, such that an arbitrary packet $P_j$ is the element of exactly $\rho_j$ ($\in\N$) distinct subsets in the collection $\mathscr{C}$, %which satisfies the following conditions.
%\begin{itemize}
%    \item For each $j\in[\theta]$, packet $P_j$ appears exactly $\rho_j$ ($\in\N$) times in the collection $\mathscr{C}$.
%    \item For every $i = 1, 2, \ldots, n$, $|U_i|=\alpha_i$ ($\alpha_i \in\N$).
%    \item $\nexists\ j\in[\theta]\ s.t.$ packet $P_j$ is stored more then one time on some node $U_i$,
%\end{itemize}
where $|U_i|=\alpha_i$ denotes the number of packets stores on the node $U_i$, $\rho = \max\{\rho_j:j\in[\theta]\}$ is the maximum replication among all packets and the maximum number of packets on any node is given by $\alpha=\max\{\alpha_i:i\in[n]\}$. Clearly $\sum_{j=1}^{\theta}\rho_j = \sum_{i=1}^n\alpha_i$. 
\label{FR code}
\end{definition}

\begin{remark}
    For each $i\in[n]$ and each $j\in[\theta]$, the FR code $\mathscr{C}(n, \theta, \alpha, \rho)$ can also be represented as a collection of $\theta$ distinct $\rho_j$-element subsets of node set, where each node is the member of exactly $\alpha_i$ distinct subsets. 
\end{remark}

\begin{remark}
    An FR code $\mathscr{C}(n, \theta, \alpha, \rho)$ is called FR code with symmetric parameters (or uniform parameters) if $\alpha_i=\alpha$ ($i=1,2,\ldots,n$) and $\rho_i=\rho$ ($i=1,2,\ldots,\theta$), where a data collector can get complete file information by connecting any $k(<n)$ nodes and an arbitrary node can be repaired by any $d<n$ nodes of the FR code. All the FR codes, studied in \cite{rr10}, are the FR code swith symmetric parameters. An FR code with some non-uniform parameters is called FR code with asymmetric parameters.
\end{remark}

\begin{remark}
    In this paper, the notation $P_j\in U_i$ represents that packet $P_j$ is stored in node $U_i$. Again, the notation $P_j\notin U_i$ represents that packet $P_j$ is not stored in node $U_i$.
\end{remark}

An FR code with $n$ nodes and $\theta$ packets can also be represented by a matrix that is not necessarily a square matrix called \textit{node packet distribution incidence matrix} (NPDI Matrix). The entries of the matrix are associated with distribution of packets on nodes. %In the matrix $M_{n\times\theta}=[a_{i,j}]$, $a_{i,j}=1$ only if packet $P_j$ is stored on node $U_i$ otherwise $a_{i,j}=0$.
Formally, the NPDI Matrix is defined as follows.
\begin{definition}(Node Packet Distribution Incidence Matrix):
    For an FR code  $\mathscr{C}:(n, \theta, \alpha, \rho)$, a matrix $M_{n\times\theta}=[a_{i,j}]$ over $\{0,1\}$ is called node packet distribution incidence matrix (NPDI Matrix) if 
    \[
    a_{i,j} = \left\{
    \begin{array}{ll}
         1\  & P_j\in U_i; \\
         0\  & P_j\notin U_i.
    \end{array}
    \right.
    \]
    \label{NPDI Matrix definition}
\end{definition}
%Each FR code $\mathscr{C}:(n, \theta, \alpha, \rho)$ can be represented by a matrix $M_{n\times\theta}=[a_{i,j}]$ called \textit{node packet distribution incidence matrix} (NPDI Matrix). In the matrix $M_{n\times\theta}=[a_{i,j}]$, $a_{i,j}=1$ only if packet $P_j$ is stored on node $U_i$ otherwise $a_{i,j}=0$.
For the FR code $\mathscr{C}:(6,10,4,3)$ as considered in the Table \ref{example}, the NPDI Matrix $M_{6\times 10}$ is as follows.
\begin{equation}
    M_{6\times 10} =
    \begin{bmatrix}
    1 & 1 & 1 & 1 & 0 & 0 & 0 & 0 & 0 & 0 \\
    1 & 0 & 0 & 0 & 1 & 1 & 1 & 0 & 0 & 0 \\
    0 & 1 & 0 & 0 & 1 & 0 & 0 & 1 & 1 & 0 \\
    0 & 0 & 1 & 0 & 0 & 1 & 0 & 1 & 0 & 0 \\
    0 & 0 & 0 & 1 & 0 & 0 & 1 & 0 & 0 & 1 \\
    0 & 0 & 0 & 0 & 0 & 0 & 0 & 1 & 1 & 1 \\
\end{bmatrix}_{6 \times 10}. 
\label{NPDI Matrix example}
\end{equation}

%\begin{definition}    (FR Code Rate): 
%For given reconstruction sets, FR Code Rate $\mathcal{R}_{\mathscr{C}}$ of an FR code $\mathscr{C}:(n,\theta,\alpha,\rho)$ is the minimum number of distinct packets contained in the union of nodes of arbitrary reconstruction set $\mathcal{A}$. Mathematically, 

For an FR code $\mathscr{C}:(n,\theta,\alpha,\rho)$, let $\mathcal{P}_{\mathscr{C}}$ be the minimum number of distinct packets contained in the union of nodes of an arbitrary reconstruction set $\mathcal{A}$. Mathematically, 

    \[
    \mathcal{P}_{\mathscr{C}}:= \min_{\mathcal{A}}\left\lbrace \left| \bigcup_{U\in\mathcal{A}}U\right|\right\rbrace.
    \]
    %where $U$ is node on which packets are stored. 
    By properties of a reconstruction set for an FR code, one can observe $B\leq\mathcal{P}_{\mathscr{C}}$.
%\end{definition}

For an FR code  $\mathscr{C}(n, \theta, \alpha, \rho)$, the amount $\mathcal{P}_{\mathscr{C}}$ is the maximum file size that can be delivered to a data collector by connecting nodes from any reconstructing set. For each reconstruction set $\mathcal{A}$, there is a $k_{max}$-element subset $\{U_i:i\in I\subset[n], |I|=k_{max}\}$ of node set such that $\mathcal{A}\subseteq\{U_i:i\in I\subset[n], |I|=k_{max}\}$, where $k_{max}$ is the maximum reconstruction degree. For given positive integer $k_{max}<n$, the code dimension $\mathcal{D}_{\mathscr{C}}(k)$ of an FR code $\mathscr{C}:(n,\theta,\alpha,\rho)$ is the maximum number of packets, which has guarantee to deliver to the data collector connected with any $k_{max}$ nodes. Formally, the code dimension $\mathcal{D}_{\mathscr{C}}(k)$ is defined as follows. 

\begin{definition}
    (Code Dimension): For a given integer $k<n$, the code dimension $\mathcal{D}_{\mathscr{C}}(k)$ of an FR code $\mathscr{C}:(n,\theta,\alpha,\rho)$ is the maximum number of distinct packets that guarantee to deliver to any user connected with any $k$ nodes of the FR code. Mathematically, 
    \[
    \mathcal{D}_{\mathscr{C}}(k):= \min_{\begin{array}{c}
                                        I\subset[n] \\
                                        |I|=k
                                      \end{array}
    }\left\lbrace \left| \bigcup_{i\in I}U_i\right|\right\rbrace,
    \]
    where $U_i$ ($i\in[n]$) is node on which packets are stored.
    Note that if the file with size $B$ can be retrieved by connecting any $k$ nodes then $B\leq\mathcal{D}_\mathscr{C}(k)$.
\end{definition}
For the FR code $\mathscr{C}:(6,10,4,3)$ as given in Table \ref{example}, one can find the code dimension $\mathcal{D}_{\mathscr{C}}(3)$ is $7$. 

In \cite{rr10}, an FR code $\mathscr{C}(n, \theta, \alpha, \rho)$ with a symmetric parameter is called \textit{universally good} if the code dimension is guaranteed to be more than or equal to the DSS capacity $kd-{k \choose 2}$ for any $k<n$ $i.e.$
\begin{equation}    
\mathcal{D}_{\mathscr{C}}(k) \geq kd -{k \choose 2}.
\label{uni-good on Homo FR code}
\end{equation}

\begin{remark}
    Consider an FR code $\mathscr{C}(n, \theta, \alpha, \rho)$ with 
    \begin{itemize}
        \item uniform node storage capacity $i.e.$ $\alpha_i = \alpha$ (for each $i\in[n]$) and
        \item $|U_i\cap U_j|\leq1$ for each $1\leq i<j\leq n$.
    \end{itemize} 
    The FR code satisfies the Inequality (\ref{uni-good on Homo FR code}) \cite{rnvr9a}.
\end{remark}

An FR code $\mathscr{C}(n, \theta, \alpha, \rho)$ with an asymmetric parameter is called \textit{universally good} if the code dimension is guaranteed to be more than or equal to $\sum_{i=1}^k d_i-{k \choose 2}$ for any $k<n$ $i.e.$
\begin{equation}    
\mathcal{D}_{\mathscr{C}}(k) \geq \sum_{i=1}^k d_i-{k \choose 2},
\label{uni-good on Hetero FR code}
\end{equation}
where $d_1\leq d_2\leq\ldots\leq d_n$ and $d_i$ is repair degree of node $U_i$.

\begin{remark}
    An FR code $\mathscr{C}(n, \theta, \alpha, \rho)$ with $|U_i\cap U_j|\leq1$ (for each $1\leq i<j\leq n$) satisfies the Inequality (\ref{uni-good on Hetero FR code}) \cite{6763122}.
    \label{remark 4}
\end{remark}

For a given $n\in\N$, the code rate of an ($n,k$) DSS (DSS with $n$ distinct nodes and the maximum reconstruction degree $k<n$) is 
    \[
    \mathcal{R}_{DSS}(k):=\frac{k}{n}.
    \]

%For a storage system, \textit{code rate} can be defined as the ratio of source message packets to the encoded packets. 
Similarly, for an FR code defined on an $(n,k)$ DSS, the \textit{FR code rate} is the average fraction of information stored in a message packet. Formally the FR code rate can be defined as follows.

\begin{definition}(FR Code Rate):
    For an FR code $\mathscr{C}:(n,\theta,\alpha,\rho)$, the FR code rate is given by
    \begin{equation}
        \mathcal{R}_{\mathscr{C}}(k) := \frac{\mathcal{D}_{\mathscr{C}}(k)}{\sum_{j=1}^\theta\rho_j}.
    \end{equation}
\end{definition}
For the FR code as given in Table \ref{example}, one can find the FR code rate $\mathcal{R}_{\mathscr{C}}(3)$ that is $7/21$. %Note that the specific code rate value is more than code rate of GFR code given in \cite[Fig. 1(a)]{6763122}.

\section{Bounds on Fractional Repetition Codes}
For FR codes, bounds on code dimension, FR code rate and DSS code rate are calculated in this section. The following theorem gives a bound on the code dimension. 

\begin{theorem}
 Let $\mathscr{C}:(n,\theta,\alpha,\rho)$ be an FR code with $n$ nodes and the maximum reconstruction degree $k$. If $\rho_j$ ($j\in[\theta]$) is the replication factor for packet $P_j$ then the maximum file size $\mathcal{D}_{\mathscr{C}}(k)$ stored in the FR code, is bounded as
 \begin{equation}
     \mathcal{D}_{\mathscr{C}}(k) \leq\left\lfloor\sum_{j=1}^\theta\left(1-\frac{{n-\rho_j\choose k}}{{n\choose k}}\right)\right\rfloor.
 \end{equation}
 \label{theorem 9}
\end{theorem}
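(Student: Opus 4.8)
The plan is to bound the minimum defining $\mathcal{D}_{\mathscr{C}}(k)$ from above by the \emph{average} of $|\bigcup_{i\in I}U_i|$ over all $k$-element index sets $I\subset[n]$, and then to evaluate that average by a double-counting argument over packets. Since $\mathcal{D}_{\mathscr{C}}(k)=\min_{I\subset[n],\,|I|=k}|\bigcup_{i\in I}U_i|$, for every such $I$ we have $\mathcal{D}_{\mathscr{C}}(k)\leq|\bigcup_{i\in I}U_i|$; summing this inequality over all ${n\choose k}$ index sets $I$ and dividing by ${n\choose k}$ gives
\[
\mathcal{D}_{\mathscr{C}}(k)\ \leq\ \frac{1}{{n\choose k}}\sum_{I\subset[n],\,|I|=k}\left|\bigcup_{i\in I}U_i\right|.
\]

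Next I would interchange the order of summation on the right. Interpreting $|\bigcup_{i\in I}U_i|$ as the number of packets $P_j$ stored in at least one of the nodes indexed by $I$, the double sum counts the pairs $(I,P_j)$ with $P_j\in\bigcup_{i\in I}U_i$; grouping these pairs by $j$ shows it equals $\sum_{j=1}^\theta N_j$, where $N_j$ is the number of $k$-element index sets $I$ with $P_j\in\bigcup_{i\in I}U_i$. By the definition of an FR code, $P_j$ is stored in exactly $\rho_j$ of the $n$ nodes, so $P_j\notin\bigcup_{i\in I}U_i$ exactly when $I$ is a $k$-subset of the $n-\rho_j$ nodes not storing $P_j$; there are ${n-\rho_j\choose k}$ such $I$, with the usual convention that this is $0$ when $n-\rho_j<k$. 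Hence $N_j={n\choose k}-{n-\rho_j\choose k}$, and substituting yields
\[
\mathcal{D}_{\mathscr{C}}(k)\ \leq\ \sum_{j=1}^\theta\left(1-\frac{{n-\rho_j\choose k}}{{n\choose k}}\right).
\]
Finally, since $\mathcal{D}_{\mathscr{C}}(k)$ is a nonnegative integer whereas the right-hand side need not be, one may replace the right-hand side by its floor, which is precisely the asserted bound.

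This is essentially a one-line averaging computation, so I do not anticipate a genuine obstacle. The points that deserve a little care are: making the swap of summations precise, so that each pair $(I,P_j)$ with $P_j\in\bigcup_{i\in I}U_i$ is counted exactly once on both sides; handling the degenerate range $n-\rho_j<k$ through the standard binomial convention so that the formula $N_j={n\choose k}-{n-\rho_j\choose k}$ remains valid; and noting that passing to the floor is justified solely by the integrality of $\mathcal{D}_{\mathscr{C}}(k)$ and not by any feature of the averaged quantity.
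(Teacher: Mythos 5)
Your proof is correct and follows essentially the same averaging-plus-double-counting argument as the paper: bound the minimum by the average over all $k$-subsets, then count for each packet $P_j$ the ${n\choose k}-{n-\rho_j\choose k}$ subsets whose union contains it. Your treatment is in fact slightly more careful than the paper's, since you correctly attribute the floor to the integrality of $\mathcal{D}_{\mathscr{C}}(k)$ (the paper writes the average itself as equal to its floor) and you explicitly handle the degenerate case $n-\rho_j<k$ via the standard binomial convention.
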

\begin{proof}
The proof is motivated by the proof of the Lemma $(14)$ in \cite{rr10}. Let $\mathcal{S}$ = $\{S_I=\cup_{i\in I}U_i:I\subset[n], |I|=k\}$ be a collection of packet sets such that the packet set is contained in $k$-subsets of node set collectively. If the average cardinality of the sets in $\mathcal{S}$ is denoted by $\overline{S}$, then 
\begin{equation}
    \sum_{S_I\in\mathcal{S}}|S_I|={n\choose k}\overline{S}.
\end{equation}
But, an arbitrary packet $P_j$ ($j\in[\theta]$) is the member of exactly ${n\choose k}-{n-\rho_j\choose k}$ distinct sets in $\mathcal{S}$. Hence,
\begin{equation}
    \sum_{S_I\in\mathcal{S}}|S_I|=\sum_{j=1}^\theta\left({n\choose k}-{n-\rho_j\choose k}\right).
\end{equation}
Since, any $k$ nodes have sufficient packets to reconstruct file so, $\mathcal{D}_{\mathscr{C}}(k) \leq |S_I|$ for each $S_I\in\mathcal{S}$. Hence,
\begin{equation}
    \mathcal{D}_{\mathscr{C}}(k) \leq \overline{S} = \left \lfloor\sum_{j=1}^\theta\left(1-\frac{{n-\rho_j\choose k}}{{n\choose k}}\right)\right \rfloor.
\end{equation}
\end{proof}

 \begin{remark}
For an FR code with symmetric parameters, the Theorem \ref{theorem 9} is proved in \cite{rr10}. For an FR code with symmetric node storage capacity and symmetric reconstruction degree, the Theorem \ref{theorem 9} is proved in \cite[Equation 6]{iet:/content/journals/10.1049/iet-com.2014.1225}. 
\end{remark}

If a data collector can retrieve the complete file information form the FR code in presence of an arbitrary node failure then the FR code is reliable FR code. The following lemma gives a bound on FR code rate of a reliable FR code.

\begin{lemma}
    The code rate of a reliable FR code is bounded by $1/2$.
    \label{lemma 1 on code rate}
\end{lemma}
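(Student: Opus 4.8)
The plan is to combine two elementary estimates: a lower bound of $2\theta$ on the denominator $\sum_{j=1}^{\theta}\rho_j$ forced by reliability, and the trivial upper bound $\mathcal{D}_{\mathscr{C}}(k)\le\theta$ on the numerator coming from the fact that there are only $\theta$ distinct packets in the system. Dividing the second by the first produces exactly $1/2$.

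First I would pin down what reliability imposes on the replication factors. Recall that an FR code $\mathscr{C}:(n,\theta,\alpha,\rho)$ can tolerate $\min\{\rho_j:j\in[\theta]\}-1$ simultaneous node failures; hence a code that still delivers the complete file after an arbitrary single node failure must satisfy $\min\{\rho_j:j\in[\theta]\}-1\ge 1$, i.e.\ $\rho_j\ge 2$ for every $j\in[\theta]$. (Concretely, if some $\rho_j=1$ then the packet $P_j$ is stored on a unique node, and the failure of that node erases $P_j$ from the system, so the data collector can no longer access all $\theta$ encoded packets.) Summing over $j$ then gives $\sum_{j=1}^{\theta}\rho_j\ge 2\theta$.

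Next, for the numerator, observe that for every index set $I\subseteq[n]$ one has $\bigcup_{i\in I}U_i\subseteq\{P_1,P_2,\ldots,P_\theta\}$, so $\bigl|\bigcup_{i\in I}U_i\bigr|\le\theta$; taking the minimum over all $k$-subsets $I$ yields $\mathcal{D}_{\mathscr{C}}(k)\le\theta$ (this also follows from Theorem~\ref{theorem 9} together with $\rho_j\ge 1$). Combining the two estimates with the definition of the FR code rate,
\[
\mathcal{R}_{\mathscr{C}}(k)=\frac{\mathcal{D}_{\mathscr{C}}(k)}{\sum_{j=1}^{\theta}\rho_j}\le\frac{\theta}{2\theta}=\frac{1}{2}.
\]

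The argument is short, so the only point that needs care is the first step: making precise the sense in which ``reliable'' means ``survives an arbitrary single node failure,'' and thereby justifying $\rho_j\ge 2$ for all $j$ rather than merely $\rho_j\ge 1$. Once that is granted, everything else is the one-line division above. Note also that equality $\mathcal{R}_{\mathscr{C}}(k)=1/2$ occurs precisely when every packet is replicated exactly twice and some $k$ nodes already cover all $\theta$ packets, which is consistent with the subsequent observation that a maximal-rate reliable FR code has tolerance factor $1$.
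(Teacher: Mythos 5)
Your proof is correct and follows essentially the same route as the paper's: reliability forces $\rho_j\ge 2$ for every $j$, hence $\sum_{j=1}^{\theta}\rho_j\ge 2\theta$, and combining this with $\mathcal{D}_{\mathscr{C}}(k)\le\theta$ gives the bound $1/2$. Your added justification of why a singly-replicated packet breaks reliability is a welcome clarification that the paper leaves implicit, but it does not change the argument.
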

\begin{proof}
    If an FR code $\mathscr{C}(n, \theta, \alpha, \rho)$ is reliable, then the replication factor of each packet is atleast $2$. Hence, $\sum_{j=1}^\theta\rho_j\geq 2\theta$. The inequality follows 
    \[
    \mathcal{R}_{\mathscr{C}}(k) =\frac{\mathcal{D}_{\mathscr{C}}(k)}{\sum_{j=1}^\theta\rho_j} \leq \frac{\mathcal{D}_{\mathscr{C}}(k)}{2\theta}.
    \]
    The code dimension $\mathcal{D}_{\mathscr{C}}(k)$ can not be more than the total number of encoded packets $\theta$. Therefore,
    
    \[
    \mathcal{R}_{\mathscr{C}}(k) =\frac{\mathcal{D}_{\mathscr{C}}(k)}{\sum_{j=1}^\theta\rho_j} \leq \frac{\mathcal{D}_{\mathscr{C}}(k)}{2\theta}\leq \frac{1}{2}.
    \]
    It follows the lemma.
\end{proof}

Consider an FR code $\mathscr{C}:(n,\theta,\alpha,\rho)$ with replication factor $\rho_j=2$ (for each $j\in[\theta]$) on a $(\theta,B)$ MDS code. The FR code rate approaches to $1/2$, if $\mathcal{D}_{\mathscr{C}}(k)/\theta$ approaches to $1$. Note that the ratio $\mathcal{D}_{\mathscr{C}}(k)/\theta$ will approach $1$, if the values of both the code dimension $\mathcal{D}_{\mathscr{C}}(k)$ and the total number of distinct encoded packets $\theta$ are very close and are very high. Recall that the tolerance factor of the FR code is $\rho-1=1$.

\begin{remark}
    For an FR code $\mathscr{C}:(n,\theta,\alpha,\rho)$, the code rate     
    \[
    \mathcal{R}_{\mathscr{C}}(k) =\frac{\mathcal{D}_{\mathscr{C}}(k)}{\sum_{j=1}^\theta\rho_j} \geq \frac{\alpha_{min}}{\sum_{j=1}^\theta\rho_j}\geq \frac{\alpha_{min}}{\rho\theta}.
    \]   
    \label{lower bound on FR code rate}
\end{remark}

\begin{lemma}
    For any FR code $\mathscr{C}:(n,\theta,\alpha,\rho)$, the growth rate of the FR code rate is
    \[
    \frac{\mathcal{R}_{\mathscr{C}}(k)-\mathcal{R}_{\mathscr{C}}(k-1)}{\mathcal{R}_{\mathscr{C}}(k-1)} \leq \frac{\sum_{j=1}^n\alpha_i}{2\alpha_{min}}-1,
    \]
    where $k=2,3,\ldots n-1$.
\end{lemma}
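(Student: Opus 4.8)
The plan is to turn the left-hand side into a ratio of two consecutive FR code rates and then trap that ratio between bounds already established. Since the denominator $\sum_{j=1}^{\theta}\rho_j$ in $\mathcal{R}_{\mathscr{C}}(k)=\mathcal{D}_{\mathscr{C}}(k)/\sum_{j=1}^{\theta}\rho_j$ does not depend on $k$,
\[
\frac{\mathcal{R}_{\mathscr{C}}(k)-\mathcal{R}_{\mathscr{C}}(k-1)}{\mathcal{R}_{\mathscr{C}}(k-1)}=\frac{\mathcal{R}_{\mathscr{C}}(k)}{\mathcal{R}_{\mathscr{C}}(k-1)}-1=\frac{\mathcal{D}_{\mathscr{C}}(k)}{\mathcal{D}_{\mathscr{C}}(k-1)}-1 .
\]
This step is valid because, by Remark \ref{lower bound on FR code rate} applied at reconstruction degree $k-1$ (legitimate since $k\ge 2$ forces $1\le k-1<n$), we have $\mathcal{R}_{\mathscr{C}}(k-1)\ge \alpha_{min}/\sum_{j=1}^{\theta}\rho_j>0$, so dividing by $\mathcal{R}_{\mathscr{C}}(k-1)$ is allowed. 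Hence it suffices to show $\mathcal{R}_{\mathscr{C}}(k)/\mathcal{R}_{\mathscr{C}}(k-1)\le \big(\sum_{i=1}^{n}\alpha_i\big)/(2\alpha_{min})$.

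For the numerator I would use Lemma \ref{lemma 1 on code rate} to get $\mathcal{R}_{\mathscr{C}}(k)\le \tfrac12$; this is the only place reliability is invoked, since it is what guarantees $\rho_j\ge 2$ for all $j$, hence $\sum_{j=1}^{\theta}\rho_j\ge 2\theta\ge 2\,\mathcal{D}_{\mathscr{C}}(k)$. For the denominator I would reuse $\mathcal{R}_{\mathscr{C}}(k-1)\ge \alpha_{min}/\sum_{j=1}^{\theta}\rho_j$ from Remark \ref{lower bound on FR code rate}. Combining,
\[
\frac{\mathcal{R}_{\mathscr{C}}(k)}{\mathcal{R}_{\mathscr{C}}(k-1)}\le \frac{1}{2}\cdot\frac{\sum_{j=1}^{\theta}\rho_j}{\alpha_{min}}=\frac{\sum_{j=1}^{\theta}\rho_j}{2\alpha_{min}},
\]
and invoking the identity $\sum_{j=1}^{\theta}\rho_j=\sum_{i=1}^{n}\alpha_i$ of Definition \ref{FR code} and subtracting $1$ gives the assertion.

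Essentially all the work has already been done in Lemma \ref{lemma 1 on code rate} and Remark \ref{lower bound on FR code rate}, so I do not expect a real obstacle; the care needed is purely in the bookkeeping. One should check that the prescribed range $k=2,3,\dots,n-1$ is exactly what makes $k-1$ a valid positive reconstruction degree below $n$, so that Remark \ref{lower bound on FR code rate} does apply to $\mathcal{R}_{\mathscr{C}}(k-1)$; that the strict positivity of that lower bound is what lets one pass from the difference quotient to the quotient; and that, as in Lemma \ref{lemma 1 on code rate}, the inequality implicitly presumes the FR code is reliable. If one prefers to avoid quoting Lemma \ref{lemma 1 on code rate}, the same bound follows directly from $\mathcal{D}_{\mathscr{C}}(k)\le\theta$ together with $\sum_{j=1}^{\theta}\rho_j\ge 2\theta$, or from Theorem \ref{theorem 9}, which also gives $\mathcal{D}_{\mathscr{C}}(k)\le\theta$.
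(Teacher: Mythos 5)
Your proposal is correct and follows essentially the same route as the paper: the paper's own proof is a one-line citation of Lemma \ref{lemma 1 on code rate} and Remark \ref{lower bound on FR code rate}, which is exactly the combination you spell out (upper-bounding $\mathcal{R}_{\mathscr{C}}(k)$ by $1/2$, lower-bounding $\mathcal{R}_{\mathscr{C}}(k-1)$ by $\alpha_{min}/\sum_{j}\rho_j$, and using $\sum_j\rho_j=\sum_i\alpha_i$). Your added observations — that positivity of the lower bound justifies dividing, and that reliability is implicitly assumed even though the lemma says ``any FR code'' — are accurate and make explicit what the paper leaves unsaid.
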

\begin{proof}
    The proof follows the Lemma \ref{lemma 1 on code rate} and the Remark \ref{lower bound on FR code rate}.
\end{proof}

\begin{remark}
    For any FR code $\mathscr{C}:(n,\theta,\alpha,\rho)$ with symmetric parameters, the growth rate of FR code rate is
    \[
    \frac{\mathcal{R}_{\mathscr{C}}(k)-\mathcal{R}_{\mathscr{C}}(k-1)}{\mathcal{R}_{\mathscr{C}}(k-1)} \leq \frac{n}{2}-1.
    \]
\end{remark}

    For an FR code $\mathscr{C}:(n,\theta,\alpha,\rho)$, $\sum_{j=1}^{\theta}\rho_j = \sum_{i=1}^n\alpha_i$. So, the FR code rate 
    \[
    \mathcal{R}_{\mathscr{C}}(k) = \frac{\mathcal{D}_{\mathscr{C}}(k)}{\sum_{i=1}^n\alpha_i}.
    \] 
    
    For each reconstruction degree $k<n$, each FR code $\mathscr{C}:(n,\theta,\alpha,\rho)$ is defined on an ($n,k$) DSS. 
    The following lemma gives a relation between the FR code rate and the respective DSS code rate.

\begin{lemma}
    For an FR code, the FR code rate is bounded by the DSS code rate.
    \label{lemma 3 on code rate}
\end{lemma}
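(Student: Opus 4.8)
The plan is to unpack both sides of the claimed inequality into explicit quantities and then reduce everything to a single averaging argument over all $k$-element subsets of the node set. Recall that, as noted just before the statement, $\mathcal{R}_{\mathscr{C}}(k)=\mathcal{D}_{\mathscr{C}}(k)/\sum_{i=1}^n\alpha_i$, while by definition $\mathcal{R}_{DSS}(k)=k/n$. So the lemma is equivalent to the inequality $n\,\mathcal{D}_{\mathscr{C}}(k)\leq k\sum_{i=1}^n\alpha_i$, which is what I would aim to establish.

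First I would use subadditivity of cardinality under union: for every index set $I\subset[n]$ with $|I|=k$ we have $\left|\bigcup_{i\in I}U_i\right|\leq\sum_{i\in I}|U_i|=\sum_{i\in I}\alpha_i$. Since $\mathcal{D}_{\mathscr{C}}(k)$ is the minimum of the left-hand side over all such $I$, it is in particular at most the \emph{average} of the right-hand side over all $\binom{n}{k}$ choices of $I$. Thus
\[
\mathcal{D}_{\mathscr{C}}(k)\ \leq\ \frac{1}{\binom{n}{k}}\sum_{\substack{I\subset[n]\\ |I|=k}}\ \sum_{i\in I}\alpha_i.
\]

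Next I would evaluate the double sum by interchanging the order of summation: each fixed $i\in[n]$ appears in exactly $\binom{n-1}{k-1}$ of the $k$-subsets $I$, so the double sum equals $\binom{n-1}{k-1}\sum_{i=1}^n\alpha_i$. Using the identity $\binom{n-1}{k-1}\big/\binom{n}{k}=k/n$, this gives $\mathcal{D}_{\mathscr{C}}(k)\leq\frac{k}{n}\sum_{i=1}^n\alpha_i$, and dividing through by $\sum_{i=1}^n\alpha_i>0$ yields $\mathcal{R}_{\mathscr{C}}(k)\leq k/n=\mathcal{R}_{DSS}(k)$, as required.

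There is essentially no hard step here; the only place to be slightly careful is the combinatorial bookkeeping (counting the $k$-subsets through a fixed node as $\binom{n-1}{k-1}$ and simplifying the ratio), and the observation that $\mathcal{D}_{\mathscr{C}}(k)$, being a minimum, is dominated by the average rather than only by a single term — that is what makes the averaging trick legitimate. One could alternatively phrase the same argument by noting that the worst-case reconstruction set $I^\ast$ achieving the minimum satisfies $\mathcal{D}_{\mathscr{C}}(k)=\left|\bigcup_{i\in I^\ast}U_i\right|\leq\sum_{i\in I^\ast}\alpha_i\leq\frac{k}{n}\sum_{i=1}^n\alpha_i$ provided one knows the $k$ smallest $\alpha_i$ sum to at most $\frac{k}{n}\sum_i\alpha_i$; but the averaging version is cleaner and avoids sorting the $\alpha_i$.
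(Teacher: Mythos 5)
Your proof is correct and follows essentially the same route as the paper: both reduce the lemma to the single inequality $\tfrac{1}{k}\mathcal{D}_{\mathscr{C}}(k)\leq\tfrac{1}{n}\sum_{i=1}^{n}\alpha_i$ and then divide through. The paper merely asserts that inequality, whereas you justify it via subadditivity of unions plus the double-counting identity $\binom{n-1}{k-1}\big/\binom{n}{k}=k/n$ (or, alternatively, by bounding the minimizing set directly), so your write-up is in fact more complete than the paper's own argument.
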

\begin{proof}
    For an FR code, let a file with size $\mathcal{D}_{\mathscr{C}}(k)$ (equal to the code dimension), be stored in an FR code $\mathscr{C}:(n,\theta,\alpha,\rho)$. Since $\frac{1}{k}\mathcal{D}_{\mathscr{C}}(k)\leq\frac{1}{n}\sum_{i=1}^n\alpha_i$,
    \[
    \mathcal{R}_{\mathscr{C}}(k) = \frac{\mathcal{D}_{\mathscr{C}}(k)}{\sum_{i=1}^n\alpha_i} \leq %\frac{\mathcal{D}_{\mathscr{C}}(k)}{n\alpha_{min}} =
    \frac{k}{n}\left(\frac{\frac{1}{k}\mathcal{D}_{\mathscr{C}}(k)}{\frac{1}{n}\sum_{i=1}^n\alpha_i}\right)<\frac{k}{n} = \mathcal{R}_{DSS}(k).
    \]
\end{proof}

\begin{theorem}
    For an ($n,k$) DSS, consider an FR code $\mathscr{C}:(n,\theta,\alpha,\rho)$ with $|U_i\cap U_j|\leq1$ (for each $1\leq i<j\leq n$). Then, the rate difference
    \[
    \mathcal{R}_{DSS}(k)-\mathcal{R}_{\mathscr{C}}(k)\leq\frac{1}{\sum_{i=1}^n\alpha_i}{k\choose 2}.
    \]
\end{theorem}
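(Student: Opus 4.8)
The plan is to convert the claimed inequality into a single lower bound on the code dimension and then obtain that bound from the hypothesis $|U_i\cap U_j|\le1$. Since $\sum_{j=1}^{\theta}\rho_j=\sum_{i=1}^n\alpha_i$ we have $\mathcal{R}_{\mathscr{C}}(k)=\mathcal{D}_{\mathscr{C}}(k)/\sum_{i=1}^n\alpha_i$, while $\mathcal{R}_{DSS}(k)=k/n$; hence
\[
\mathcal{R}_{DSS}(k)-\mathcal{R}_{\mathscr{C}}(k)\le\frac{1}{\sum_{i=1}^n\alpha_i}{k\choose 2}
\]
is equivalent to $\mathcal{D}_{\mathscr{C}}(k)\ge\frac{k}{n}\sum_{i=1}^n\alpha_i-{k\choose 2}$. (Positivity of $\mathcal{R}_{DSS}(k)-\mathcal{R}_{\mathscr{C}}(k)$ is already Lemma~\ref{lemma 3 on code rate}, so only this upper estimate is at issue.) So the whole problem is a lower bound on the code dimension of a ``linear'' FR code, and that is where I would concentrate the work.

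For the lower bound I would argue directly. Fix $I\subset[n]$ with $|I|=k$. A Bonferroni estimate together with $|U_i\cap U_j|\le1$ gives
\[
\Big|\bigcup_{i\in I}U_i\Big|\ \ge\ \sum_{i\in I}|U_i|-\sum_{\{i,j\}\subseteq I}|U_i\cap U_j|\ \ge\ \sum_{i\in I}\alpha_i-{k\choose 2},
\]
and minimizing over all $k$-subsets $I$ yields $\mathcal{D}_{\mathscr{C}}(k)\ge\min_{|I|=k}\sum_{i\in I}\alpha_i-{k\choose 2}$ (this is essentially the universally good inequality of Remark~\ref{remark 4}, since for such a code each node is repaired by one helper per packet, so the sorted repair degrees coincide with the sorted storage capacities). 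Substituting into the rewritten difference,
\[
\mathcal{R}_{DSS}(k)-\mathcal{R}_{\mathscr{C}}(k)\ \le\ \frac{{k\choose 2}}{\sum_{i=1}^n\alpha_i}+\left(\frac{k}{n}-\frac{\min_{|I|=k}\sum_{i\in I}\alpha_i}{\sum_{i=1}^n\alpha_i}\right),
\]
so the theorem follows once the parenthesized term is shown to be non-positive.

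The step I expect to be the real obstacle is exactly that last inequality, $\min_{|I|=k}\sum_{i\in I}\alpha_i\ge\frac{k}{n}\sum_{i=1}^n\alpha_i$. It holds with equality precisely in the uniform-capacity case $\alpha_i=\alpha$ ($i\in[n]$): there $\min_{|I|=k}\sum_{i\in I}\alpha_i=k\alpha=\frac{k}{n}\sum_i\alpha_i$, and the bound follows --- sharply, being attained whenever $\mathcal{D}_{\mathscr{C}}(k)=k\alpha-{k\choose 2}$, which matches the symmetric MBR point $B=kd-{k\choose 2}$. For genuinely non-uniform $\alpha_i$ the $k$ smallest capacities sum to \emph{less} than the proportional share $\frac{k}{n}\sum_i\alpha_i$, so I would either impose $\alpha_i\equiv\alpha$ explicitly (apparently the intended regime) or else state the bound with the corrective term $\frac{k}{n}\sum_i\alpha_i-\min_{|I|=k}\sum_{i\in I}\alpha_i$ added on the right. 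The remaining ingredients --- the algebraic rewriting and the Bonferroni inequality --- are routine.
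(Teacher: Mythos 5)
Your reduction is essentially the paper's own: using $\sum_{j=1}^{\theta}\rho_j=\sum_{i=1}^{n}\alpha_i$, the claim becomes the lower bound $\mathcal{D}_{\mathscr{C}}(k)\geq k\alpha_{ave}-{k\choose 2}$ with $\alpha_{ave}=\frac{1}{n}\sum_{i=1}^{n}\alpha_i$, and the hypothesis $|U_i\cap U_j|\leq1$ enters only through the Bonferroni estimate $|\bigcup_{i\in I}U_i|\geq\sum_{i\in I}\alpha_i-{k\choose 2}$. The obstacle you isolate is genuine, and it is exactly the point at which the paper's proof is unsound: after sorting $\alpha_1\geq\cdots\geq\alpha_n$ and bounding $k\alpha_{ave}\leq\sum_{i=1}^{k}\alpha_i$ (the $k$ \emph{largest} capacities), the paper asserts $\sum_{i=1}^{k}\alpha_i-\mathcal{D}_{\mathscr{C}}(k)\leq{k\choose 2}$. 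But $\mathcal{D}_{\mathscr{C}}(k)$ is the minimum of $|\bigcup_{i\in I}U_i|$ over \emph{all} $k$-subsets $I$, and the minimizer need not be the $k$ largest-capacity nodes; Bonferroni only gives $\mathcal{D}_{\mathscr{C}}(k)\geq\min_{|I|=k}\sum_{i\in I}\alpha_i-{k\choose 2}$, i.e.\ the $k$ \emph{smallest} capacities, which is the wrong side of $k\alpha_{ave}$ whenever the $\alpha_i$ are non-uniform.

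Your diagnosis that the statement requires either $\alpha_i\equiv\alpha$ or the corrective term $k\alpha_{ave}-\min_{|I|=k}\sum_{i\in I}\alpha_i$ is correct, and the general statement is in fact false: take $\rho_j=2$ for all $j$ and view the code as a simple graph (nodes are vertices, packets are edges, $\alpha_i$ is the degree), and let the graph be $K_8$ together with one pendant edge at each of two of its vertices, so $n=10$, $\theta=30$, $\sum_i\alpha_i=60$. The two pendant vertices give $\mathcal{D}_{\mathscr{C}}(2)=2$, so $\mathcal{R}_{DSS}(2)-\mathcal{R}_{\mathscr{C}}(2)=\frac{1}{5}-\frac{2}{60}=\frac{1}{6}$, far exceeding the claimed bound $\frac{1}{60}{2\choose 2}=\frac{1}{60}$. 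So your proof is the paper's argument carried out carefully; it closes in the uniform-capacity case (which covers both examples the paper gives after the theorem, with $\alpha_i\equiv4$ and $\alpha_i\equiv2$), and it correctly exposes that the non-uniform case cannot be recovered without the extra term you describe.
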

\begin{proof}
For any $i,j\in[n]$ and $1\leq i<j\leq n$, let an FR code $\mathscr{C}:(n,\theta,\alpha,\rho)$ with $|U_i\cap U_j|\leq1$ be defined on ($n,k$) DSS. If $\alpha_{ave}$ is the average node storage capacity on nodes, then the code rate difference 
\begin{equation*}
\begin{split}    
    \mathcal{R}_{DSS}(k)-\mathcal{R}_{\mathscr{C}}(k) 
    & = \frac{k\frac{1}{n}\sum_{i=1}^n\alpha_i-\mathcal{D}_{\mathscr{C}}(k)}{\sum_{i=1}^n\alpha_i} \\ 
    & =\frac{k\alpha_{ave}-\mathcal{D}_{\mathscr{C}}(k)}{\sum_{i=1}^n\alpha_i}.
\end{split}
\end{equation*}

Without loss of generality, let $\alpha_i\geq\alpha_j$ for $1\leq i<j\leq n$. Hence, 
\[
    k\alpha_{ave}\leq \sum_{i=1}^k\alpha_i.
\] Therefore,
\[
    \mathcal{R}_{DSS}(k)-\mathcal{R}_{\mathscr{C}}(k)\leq \frac{1}{\sum_{i=1}^n\alpha_i}\left(\sum_{i=1}^k\alpha_i-\mathcal{D}_{\mathscr{C}}(k)\right).
\]
Note that $\sum_{i=1}^{n}\alpha_i-\mathcal{D}_{\mathscr{C}}(k)$ are the total number of common packets such that every common packet is shared by a pair of the nodes from the node set $\{U_1$, $U_2$, $\ldots$, $U_k\}$. Hence,
\[
    \mathcal{R}_{DSS}(k)-\mathcal{R}_{\mathscr{C}}(k) \leq \frac{1}{\sum_{i=1}^n\alpha_i}{k\choose 2}.
\]
\end{proof}

Consider an FR code $\mathscr{C}:(5,10,4,2)$ with the node packet distribution as given in the Table \ref{5,10,4,2}. For $k=1,2,3,4$, the FR code rate $\mathcal{R}_{\mathscr{C}}(k)$, DSS code rate $k/n$ and difference between both the code rates ate calculated in Table \ref{rates}.

\begin{table}[ht]
    \caption{The Node Packet Distribution for the FR code $\mathscr{C}:(5,10,4,2)$.}
    \centering 
    \begin{tabular}{|c||c|}
        \hline
        $U_1$& $P_1$, $P_2$, $P_3$, $P_4$         \\
        \hline \hline
        $U_2$& $P_1$, $P_5$, $P_6$, $P_7$         \\
        \hline \hline
        $U_3$& $P_2$, $P_5$, $P_8$, $P_9$         \\
        \hline \hline 
        $U_4$& $P_3$, $P_6$, $P_8$, $P_{10}$          \\
        \hline \hline 
        $U_5$& $P_4$, $P_7$, $P_9$, $P_{10}$          \\
        \hline  
    \end{tabular}
    \label{5,10,4,2}
\end{table}

\begin{table}[ht]
    \caption{For $k=2,3,4$, the ($5,k$) DSS code rate and the FR code rate of the FR code $\mathscr{C}:(5,10,4,2)$.}
    \centering 
    \begin{tabular}{|c|c|c|c|}
        \hline
        Max. Reconstruction & DSS Code Rate   &  Code Rate  & The Rate   \\
         Degree             &                 &  of FR Code & Difference \\
        $k$                 & $\mathcal{R}_{DSS}(k)$   &  $\mathcal{R}_{\mathscr{C}}(k)$  & $k(k-1)/(2n\alpha)$ \\
        \hline \hline
        $1$& $0.200$    &   $0.200$ & $0.000$\\
        \hline 
        $2$& $0.400$    &   $0.350$ & $0.050$\\
        \hline 
        $3$& $0.600$     &  $0.450$ & $0.150$ \\
        \hline 
        $4$& $0.800$   &    $0.500$  & $0.300$ \\
        \hline  
    \end{tabular}
    \label{rates}
\end{table}

For another example, consider an FR code $\mathscr{C}:(6,6,2,2)$ with the node packet distribution as given in the Table \ref{6,6,2,2}. For $k=2,3,4,5$, the code rate $\mathcal{R}_{\mathscr{C}}(k)$, the DSS code rate $\mathcal{R}_{DSS}(k)$ and the difference $\mathcal{R}_{DSS}(k)-\mathcal{R}_{\mathscr{C}}(k)$ are calculated in Table \ref{rates 6,6,2,2}.

\begin{table}[ht]
    \caption{The Node Packet Distribution for the FR code $\mathscr{C}:(6,6,2,2)$.}
    \centering 
    \begin{tabular}{|c||c|}
        \hline
        $U_1$& $P_1$, $P_2$        \\
        \hline \hline
        $U_2$& $P_2$, $P_3$         \\
        \hline \hline
        $U_3$& $P_3$, $P_4$         \\
        \hline \hline 
        $U_4$& $P_4$, $P_5$          \\
        \hline \hline 
        $U_5$& $P_5$, $P_6$          \\
        \hline \hline 
        $U_6$& $P_1$, $P_6$          \\
        \hline  
    \end{tabular}
    \label{6,6,2,2}
\end{table}

\begin{table}[ht]
    \caption{For $k=1,2,3,4,5$, the ($6,k$) DSS code rate and the FR code rate of an FR code $\mathscr{C}:(6,6,2,2)$.}
    \centering 
    \begin{tabular}{|c|c|c|c|}
        \hline
        Max. Reconstruction & DSS Code Rate   & FR Code Rate & The Rate   \\
         Degree             &                 &              & Difference \\
        $k$& $\mathcal{R}_{DSS}(k)$   &  $\mathcal{R}_{\mathscr{C}}(k)$  & $(k-1)/(2n)$ \\
        \hline \hline
        $1$& $1/6$    &   $2/12$ & $0$     \\
        \hline 
        $2$& $2/6$    &   $3/12$ & $1/12$  \\
        \hline 
        $3$& $3/6$     &  $4/12$ & $2/12$  \\
        \hline 
        $4$& $4/6$   &    $5/12$  & $3/12$ \\
        \hline 
        $5$& $5/6$   &    $6/12$  & $4/12$ \\
        \hline  
    \end{tabular}
    \label{rates 6,6,2,2}
\end{table}

\begin{lemma}
    For an FR code, if the FR code rate and the DSS code rate are same at some $k>1$ then the FR code is not reliable.
\end{lemma}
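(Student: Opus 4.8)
The plan is to argue by contradiction using the key inequalities already established in Lemma~\ref{lemma 1 on code rate} and Lemma~\ref{lemma 3 on code rate}. Suppose, for some $k>1$, that $\mathcal{R}_{\mathscr{C}}(k)=\mathcal{R}_{DSS}(k)$ and, toward a contradiction, that the FR code is reliable. First I would recall from the proof of Lemma~\ref{lemma 3 on code rate} that the equality $\mathcal{R}_{\mathscr{C}}(k)=\mathcal{R}_{DSS}(k)$ forces the chain of inequalities there to be tight; in particular it forces $\frac{1}{k}\mathcal{D}_{\mathscr{C}}(k)=\frac{1}{n}\sum_{i=1}^{n}\alpha_i$, and more importantly (since the middle step in that proof is a \emph{strict} inequality $<\frac{k}{n}$ whenever $\mathcal{D}_{\mathscr{C}}(k)<\sum_{i=1}^n\alpha_i$) it forces $\mathcal{D}_{\mathscr{C}}(k)=\sum_{i=1}^{n}\alpha_i=\sum_{j=1}^{\theta}\rho_j$.

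Next I would interpret $\mathcal{D}_{\mathscr{C}}(k)=\sum_{j=1}^{\theta}\rho_j$ combinatorially. Since $\mathcal{D}_{\mathscr{C}}(k)\le\theta$ always (the code dimension cannot exceed the number of distinct encoded packets, as used in Lemma~\ref{lemma 1 on code rate}), and since $\sum_{j=1}^{\theta}\rho_j\ge\sum_{j=1}^{\theta}\rho_{\min}$, we get $\theta\ge\sum_{j=1}^{\theta}\rho_j\ge \rho_{\min}\,\theta$, which already forces $\rho_{\min}\le 1$, i.e.\ some packet has replication factor $1$. But a reliable FR code must tolerate at least one node failure, which (as noted just before Definition~\ref{FR code}) requires $\min\{\rho_j:j\in[\theta]\}-1\ge 1$, i.e.\ $\rho_j\ge 2$ for every $j$. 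This contradicts $\rho_{\min}\le 1$, so the FR code cannot be reliable.

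Alternatively, and perhaps more cleanly, one can combine the two lemmas directly: if the code is reliable then $\mathcal{R}_{\mathscr{C}}(k)\le\tfrac12$ by Lemma~\ref{lemma 1 on code rate}, while $\mathcal{R}_{DSS}(k)=\tfrac{k}{n}$; equality $\mathcal{R}_{\mathscr{C}}(k)=\mathcal{R}_{DSS}(k)$ with $k>1$ then forces $\tfrac{k}{n}\le\tfrac12$, which is possible for suitable $n$, so this route needs the sharper observation that Lemma~\ref{lemma 3 on code rate} gives a \emph{strict} inequality $\mathcal{R}_{\mathscr{C}}(k)<\mathcal{R}_{DSS}(k)$ as soon as $\mathcal{D}_{\mathscr{C}}(k)<\sum_i\alpha_i$. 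Thus the first route (forcing $\mathcal{D}_{\mathscr{C}}(k)=\sum_i\alpha_i$ and deriving $\rho_{\min}=1$) is the one I would write up.

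The main obstacle is making precise why $\mathcal{R}_{\mathscr{C}}(k)=\mathcal{R}_{DSS}(k)$ forces $\mathcal{D}_{\mathscr{C}}(k)=\sum_{i=1}^n\alpha_i$: this hinges on reading off from the proof of Lemma~\ref{lemma 3 on code rate} that the step $\frac{k}{n}\big(\frac{\mathcal{D}_{\mathscr{C}}(k)/k}{(\sum_i\alpha_i)/n}\big)<\frac{k}{n}$ is strict precisely when $\mathcal{D}_{\mathscr{C}}(k)<\sum_i\alpha_i$ (one needs $\mathcal{D}_{\mathscr{C}}(k)\le\sum_i\alpha_i$ always, which is immediate since every node contributes $\alpha_i$ packets and $\mathcal{D}_{\mathscr{C}}(k)$ counts distinct packets in a union of only $k\le n$ of them, with repetitions over-counted). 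Once that is pinned down, the contradiction with the tolerance requirement $\rho_{\min}\ge2$ for a reliable code is immediate.
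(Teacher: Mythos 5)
There is a genuine gap in your main route. You claim that $\mathcal{R}_{\mathscr{C}}(k)=\mathcal{R}_{DSS}(k)$ forces $\mathcal{D}_{\mathscr{C}}(k)=\sum_{i=1}^{n}\alpha_i=\sum_{j=1}^{\theta}\rho_j$, but this does not follow: by definition $\mathcal{R}_{\mathscr{C}}(k)=\mathcal{D}_{\mathscr{C}}(k)/\sum_{i=1}^n\alpha_i$ and $\mathcal{R}_{DSS}(k)=k/n$, so equality gives exactly $\mathcal{D}_{\mathscr{C}}(k)=\frac{k}{n}\sum_{i=1}^n\alpha_i$, which is strictly smaller than $\sum_{i=1}^n\alpha_i$ because $k<n$. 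Indeed the two consequences you assert in the same sentence, $\frac{1}{k}\mathcal{D}_{\mathscr{C}}(k)=\frac{1}{n}\sum_i\alpha_i$ and $\mathcal{D}_{\mathscr{C}}(k)=\sum_i\alpha_i$, are mutually inconsistent for $k<n$. The strictness condition in the middle step of the proof of Lemma~\ref{lemma 3 on code rate} is $\mathcal{D}_{\mathscr{C}}(k)<\frac{k}{n}\sum_i\alpha_i$, not $\mathcal{D}_{\mathscr{C}}(k)<\sum_i\alpha_i$, so tightness only yields $\mathcal{D}_{\mathscr{C}}(k)=\frac{k}{n}\sum_j\rho_j$. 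Feeding that into your counting step gives $\theta\geq\frac{k}{n}\rho_{\min}\theta$, i.e.\ $\rho_{\min}\leq n/k$, which is perfectly compatible with $\rho_{\min}\geq2$ whenever $k\leq n/2$; no contradiction results. Your second route you already concede does not close.

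The repair is to use a sharper upper bound on the code dimension than $\mathcal{D}_{\mathscr{C}}(k)\leq\theta$. For any $k$-subset $I$ of nodes, $\bigl|\bigcup_{i\in I}U_i\bigr|\leq\sum_{i\in I}\alpha_i$, hence $\mathcal{D}_{\mathscr{C}}(k)\leq\min_{|I|=k}\sum_{i\in I}\alpha_i\leq\frac{k}{n}\sum_{i=1}^n\alpha_i$, the last step because the minimum $k$-subset sum is at most the average one. Equality of the two code rates forces equality throughout this chain, so all $k$-subset sums coincide and, for every $k$-subset, the union of its nodes has cardinality equal to the sum of their storage capacities; that is, the nodes in every $k$-subset are pairwise disjoint. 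Since $k\geq2$, no packet is stored on two distinct nodes, so $\rho_j=1$ for all $j$ and the code cannot tolerate even a single node failure. This is exactly the content of the paper's own (informally worded) proof, which reads the identity $\mathcal{D}_{\mathscr{C}}(k)/k=\frac{1}{n}\sum_i\alpha_i$ as saying there is ``no space for replicated packets.''
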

\begin{proof}
    Consider an FR code $\mathscr{C}:(n,\theta,\alpha,\rho)$ which is defined on an ($n,k$) DSS. For any given $k<n$, if possible then assume $\mathcal{R}_{\mathscr{C}}(k) = k/n$. It implies  
    \[
    \frac{\mathcal{D}_{\mathscr{C}}(k)}{k} = \frac{1}{n}\sum_{i=1}^n\alpha_i.
    \]
    So, on an average the total number of information packets are equal to the node storage capacity. Hence, there is not any space for replicated packets and it is not the case of a reliable FR code. Hence the lemma has proved.
\end{proof}

\subsection{Construction of Universally Good Fractional Repetition Codes}
  
 In this subsection, we give a construction of universally good FR code by concatenating two FR codes with different packet symbols. The resultant FR code can be called concatenated FR code.
 
 Let $\mathscr{C}^{(1)}:(n^{(1)}, \theta^{(1)}, \alpha^{(1)}, \rho^{(1)})$ and $\mathscr{C}^{(2)}:(n^{(2)}, \theta^{(2)}, \alpha^{(2)}, \rho^{(2)})$ be two FR codes with NPDI Matrices $M^{(1)}$ and $M^{(2)}$ respectively. Let both the FR codes have different packet symbols. An FR code with NPDI Matrix  
  \[
    M = \left[
    \begin{array}{cc}
    M^{(1)} & \textbf{0} \\ 
    \textbf{0} & M^{(2)}
    \end{array}\right]
    \]
  is called concatenated FR code $\mathscr{C}:(n, \theta, \alpha, \rho)$ with $n=n^{(1)}+n^{(2)}$, $\theta=\theta^{(1)}+\theta^{(2)}$, $\alpha=\max\{\alpha^{(1)},\alpha^{(2)}\}$ and $\rho=\max\{\rho^{(1)},\rho^{(2)}\}$. In the NPDI Matrix $M$, the matrix \textbf{0} denotes a zero matrix and both the zero matrices may have different dimensions.
  
%The concatenated two FR codes with different packet symbols is again an FR code. 
The following two theorems give the relation between the code rates of concatenated FR codes. 

\begin{theorem}
    For any $n\in\N$, consider two FR codes $\mathscr{C}^{(1)}:(n^{(1)}, \theta^{(1)}, \alpha^{(1)}, \rho^{(1)})$ and $\mathscr{C}^{(2)}:(n^{(2)}, \theta^{(2)}, \alpha^{(2)}, \rho^{(2)})$ with NPDI Matrices $M^{(1)}$ and $M^{(2)}$ respectively. Let an FR code $\mathscr{C}:(n, \theta, \alpha, \rho)$ be constructed with the NPDI Matrix 
    \[
    M = \left[
    \begin{array}{cc}
    M^{(1)} & \textbf{0} \\ 
    \textbf{0} & M^{(2)}
    \end{array}\right]
    \]
    by concatenating the two FR codes $\mathscr{C}^{(1)}$ and $\mathscr{C}^{(2)}$. For any $k<n=n^{(1)}+n^{(2)}$, 
            \begin{equation*}
            \begin{split}
            \frac{1}{\mathcal{R}_{DSS}(n,k)} = & \frac{1}{\left\lfloor\frac{k}{n^{(1)}}\right\rfloor+\mathcal{R}_{DSS}(n^{(1)},k)} \\ 
            &  + \frac{1}{\left\lfloor\frac{k}{n^{(2)}}\right\rfloor+\mathcal{R}_{DSS}(n^{(2)},k)},
            \end{split}
            \end{equation*} and the FR code rate
            \[
            \mathcal{R}_{\mathscr{C}}(k) \leq 
            \left\{
            \begin{array}{ll}
                 r^{(1)}\mathcal{R}_{\mathscr{C}^{(1)}}(k) + r^{(2)}\mathcal{R}_{\mathscr{C}^{(2)}}(k), & k<n_{min}; \\
                 \varrho, & \mbox{ otherwise};
            \end{array}
            \right.
            \]
            where $n_{min} = \min\{n^{(1)},n^{(2)}\}$, $\varrho = \min\{r^{(i)} + r^{(j)}\mathcal{R}_{\mathscr{C}^{(j)}}(k): i,j\in[2], i+j = 3\}$, $\mathcal{R}_{\mathscr{C}}(0) = 0$, $\mathcal{R}_{\mathscr{C}^{(j)}}(1) = \alpha^{(j)}_{min}/\sum_{i=1}^{n^{(j)}}\alpha_i^{(j)}$ and
            \[
            r^{(j)} = \frac{\sum_{i=1}^{n^{(j)}}\alpha_i^{(j)}}{\sum_{i=1}^{n^{(1)}}\alpha_i^{(1)}+\sum_{i=1}^{n^{(2)}}\alpha_i^{(2)}}
            \]
            for $j=1,2$.
    \label{FR code concatenation diagonal}
\end{theorem}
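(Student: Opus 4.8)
The plan is to treat the two assertions separately. The identity for $\mathcal{R}_{DSS}$ is pure arithmetic: since $\mathcal{R}_{DSS}(n,k)=k/n$ and $n=n^{(1)}+n^{(2)}$,
\[
\frac{1}{\mathcal{R}_{DSS}(n,k)}=\frac{n}{k}=\frac{n^{(1)}}{k}+\frac{n^{(2)}}{k},
\]
and for each $j\in\{1,2\}$ I would write $k/n^{(j)}=\lfloor k/n^{(j)}\rfloor+\bigl(k/n^{(j)}-\lfloor k/n^{(j)}\rfloor\bigr)$ and read the fractional part $k/n^{(j)}-\lfloor k/n^{(j)}\rfloor$ as $\mathcal{R}_{DSS}(n^{(j)},k)$ (which agrees with the definition whenever $k<n^{(j)}$, since then $\lfloor k/n^{(j)}\rfloor=0$). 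Substituting $n^{(j)}/k=1/\bigl(\lfloor k/n^{(j)}\rfloor+\mathcal{R}_{DSS}(n^{(j)},k)\bigr)$ into the display gives the first claim.

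For the FR code rate, the key structural fact is that, because $M$ is block diagonal and $\mathscr{C}^{(1)},\mathscr{C}^{(2)}$ use disjoint packet alphabets, every node of $\mathscr{C}$ is a node of exactly one of $\mathscr{C}^{(1)},\mathscr{C}^{(2)}$, and for any family of nodes the union of their packet sets splits as a \emph{disjoint} union of its block-$1$ part and its block-$2$ part. Hence a $k$-subset of nodes built from $k_1$ nodes of $\mathscr{C}^{(1)}$ and $k_2=k-k_1$ nodes of $\mathscr{C}^{(2)}$ covers a number of distinct packets equal to the number those $k_1$ nodes cover inside $\mathscr{C}^{(1)}$ plus the number those $k_2$ nodes cover inside $\mathscr{C}^{(2)}$, and minimising each summand independently gives
\[
\mathcal{D}_{\mathscr{C}}(k)=\min\bigl\{\mathcal{D}_{\mathscr{C}^{(1)}}(k_1)+\mathcal{D}_{\mathscr{C}^{(2)}}(k_2):k_1+k_2=k,\ 0\le k_1\le n^{(1)},\ 0\le k_2\le n^{(2)}\bigr\},
\]
with the conventions $\mathcal{D}_{\mathscr{C}^{(j)}}(0):=0$ and $\mathcal{D}_{\mathscr{C}^{(j)}}(m):=\theta^{(j)}$ for $m\ge n^{(j)}$. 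I would also record $\sum_j\rho_j=\sum_i\alpha_i=\sum_i\alpha^{(1)}_i+\sum_i\alpha^{(2)}_i$, so that $\mathcal{R}_{\mathscr{C}}(k)=\mathcal{D}_{\mathscr{C}}(k)/(\sum_i\alpha^{(1)}_i+\sum_i\alpha^{(2)}_i)$ and $r^{(j)}\mathcal{R}_{\mathscr{C}^{(j)}}(k)=\mathcal{D}_{\mathscr{C}^{(j)}}(k)/(\sum_i\alpha^{(1)}_i+\sum_i\alpha^{(2)}_i)$.

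The two cases then come out by feeding suitable admissible splits into this minimum. If $k<n_{min}$, the split $(k,0)$ is admissible, so $\mathcal{D}_{\mathscr{C}}(k)\le\mathcal{D}_{\mathscr{C}^{(1)}}(k)\le\mathcal{D}_{\mathscr{C}^{(1)}}(k)+\mathcal{D}_{\mathscr{C}^{(2)}}(k)$, and dividing by $\sum_i\alpha^{(1)}_i+\sum_i\alpha^{(2)}_i$ yields $\mathcal{R}_{\mathscr{C}}(k)\le r^{(1)}\mathcal{R}_{\mathscr{C}^{(1)}}(k)+r^{(2)}\mathcal{R}_{\mathscr{C}^{(2)}}(k)$. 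In general, the split $\bigl(\min(k,n^{(1)}),\,k-\min(k,n^{(1)})\bigr)$ is admissible (this is the only place $k<n$ is needed), its first coordinate contributing $\mathcal{D}_{\mathscr{C}^{(1)}}(k)$ by the conventions and its second contributing at most $\theta^{(2)}\le\sum_i\alpha^{(2)}_i$; hence $\mathcal{D}_{\mathscr{C}}(k)\le\mathcal{D}_{\mathscr{C}^{(1)}}(k)+\sum_i\alpha^{(2)}_i$, i.e.\ $\mathcal{R}_{\mathscr{C}}(k)\le r^{(1)}\mathcal{R}_{\mathscr{C}^{(1)}}(k)+r^{(2)}$, and the mirror-image split gives $\mathcal{R}_{\mathscr{C}}(k)\le r^{(2)}\mathcal{R}_{\mathscr{C}^{(2)}}(k)+r^{(1)}$; taking the smaller of the two yields $\mathcal{R}_{\mathscr{C}}(k)\le\varrho$. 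The only real subtlety — and where I would be most careful — is the boundary bookkeeping: the quantities $\mathcal{R}_{\mathscr{C}^{(j)}}(k)$ and $\mathcal{R}_{DSS}(n^{(j)},k)$ on the right-hand sides are not literally covered by the definitions once $k\ge n^{(j)}$, so the extensions above must be fixed and one must check that every split invoked keeps $(k_1,k_2)$ inside $[0,n^{(1)}]\times[0,n^{(2)}]$; the inequalities themselves are then immediate.
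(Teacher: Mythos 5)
Your proof is correct, and it supplies more than the paper does: the paper's entire proof of this theorem is the single sentence ``The proof is straightforward,'' so there is no argument on record to compare against. Your key lemma --- that the block-diagonal NPDI matrix and the disjoint packet alphabets force $\mathcal{D}_{\mathscr{C}}(k)=\min\{\mathcal{D}_{\mathscr{C}^{(1)}}(k_1)+\mathcal{D}_{\mathscr{C}^{(2)}}(k_2):k_1+k_2=k\}$ --- is the natural one, and feeding in the splits $(k,0)$ and $\bigl(\min(k,n^{(1)}),\,k-\min(k,n^{(1)})\bigr)$ together with its mirror image, then dividing by $\sum_i\alpha^{(1)}_i+\sum_i\alpha^{(2)}_i$, yields exactly the two cases of the claimed bound on $\mathcal{R}_{\mathscr{C}}(k)$. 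You are also right to insist on the interpretive conventions the statement silently requires: the displayed identity for $1/\mathcal{R}_{DSS}(n,k)$ fails for $n^{(j)}\le k<n$ if $\mathcal{R}_{DSS}(n^{(j)},k)$ literally means $k/n^{(j)}$, and holds only under the fractional-part reading you adopt (consistent with the $k\equiv K\pmod{n^{(1)}}$ convention used in the paper's other concatenation theorem); likewise $\mathcal{R}_{\mathscr{C}^{(j)}}(k)$ for $k\ge n^{(j)}$ needs the extension $\mathcal{D}_{\mathscr{C}^{(j)}}(k)=\theta^{(j)}$ that you fix, and your check that every invoked split stays in $[0,n^{(1)}]\times[0,n^{(2)}]$ is precisely where the hypothesis $k<n$ is used. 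I see no gap.
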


\begin{proof}
    The proof is straightforward.
\end{proof}

\begin{theorem}
    Consider an FR code $\mathscr{C}^{(1)}:(n^{(1)}, \theta^{(1)}, \alpha^{(1)}, \rho^{(1)})$ with NPDI Matrix $M^{(1)}_{n\times\theta}$. For a positive integer $m$ and $r<m$, let $\mathscr{C}^{(m)}:(n^{(m)}=mn, \theta^{(m)}=m\theta, \alpha^{(1)}, \rho^{(m)}=m\rho)$ be an FR code with NPDI Matrix $M^{(m)}$, where the matrix 
    \[
    M^{(r+1)} =
    \begin{bmatrix}
    M^{(1)} & \textbf{0} \\
    \textbf{0} & M^{(r)} \\
    \end{bmatrix}.
    \]
     For a positive integer $K<n^{(m)}$ and $k\equiv K\pmod{n^{(1)}}$, 
    \begin{enumerate}
        \item the DSS code rate 
            \[
            \mathcal{R}_{DSS}(K) = \frac{1}{m}\left(\left\lfloor\frac{K}{n^{(m)}}\right\rfloor+\frac{k}{n}\right),
            \]
        \item the FR code rate
            \[
            \mathcal{R}_{\mathscr{C}^{(m)}}(K) = \frac{1}{m}\left(\left\lfloor\frac{K}{n^{(m)}}\right\rfloor\frac{1}{\rho_{ave}}+\mathcal{R}_{\mathscr{C}}(k)\right),
            \]
        \item and the rate difference
            \begin{equation*}
            \begin{split}
            \mathcal{R}_{DSS}(K) & - \mathcal{R}_{\mathscr{C}^{(n)}}(K) \\ 
            & = \frac{1}{m}\left\{\mathcal{R}_{DSS}(k)\left(1-\frac{1}{\rho_{ave}}\right)-\mathcal{R}_{\mathscr{C}}(k)\right\},
            \end{split}
            \end{equation*}
    \end{enumerate}
    where $\mathcal{R}_{DSS}(k)$ is ($n^{(1)},k$) DSS code rate and 
    \[
    \rho_{ave}  = \frac{\sum_{i=1}^n\alpha_i}{\theta} = \frac{\sum_{j=1}^\theta\rho_j}{\theta}.
    \]
    %is the packet replication factor on average for the FR code $\mathscr{C}:(n, \theta, \alpha, \rho)$.
    \label{rate difference on concatenated FR code}
\end{theorem}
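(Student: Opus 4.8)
The plan is to reduce everything to the block-diagonal shape of $M^{(m)}$, which is just $m$ disjoint copies of $\mathscr{C}^{(1)}$ placed along the diagonal, so that the $m$ blocks of $n$ nodes carry pairwise disjoint packet symbols. First I would record the one global fact needed: each copy contributes $\sum_{j=1}^{\theta}\rho_j=\rho_{ave}\theta$ stored packets, so $\mathscr{C}^{(m)}$ stores $m\,\rho_{ave}\,\theta=\sum_{i=1}^{n^{(m)}}\alpha_i$ packets in total. Writing $q=\lfloor K/n\rfloor$ so that $K=qn+k$ with $k\equiv K\pmod{n}$ and $0\le k<n$, part~1 is then immediate arithmetic: $\mathcal{R}_{DSS}(K)=K/n^{(m)}=(qn+k)/(mn)=\tfrac1m\big(q+\tfrac{k}{n}\big)$, which is the claimed expression, $k/n$ being the $(n^{(1)},k)$ DSS code rate.

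For part~2 the work lies in evaluating $\mathcal{D}_{\mathscr{C}^{(m)}}(K)$. I would split any set of $K$ nodes according to how many of them, say $k_\ell$, it uses from block $\ell$, so $0\le k_\ell\le n$ and $\sum_{\ell=1}^m k_\ell=K$; since distinct blocks have disjoint packet universes, the number of packets such a set sees is exactly the sum over $\ell$ of the packets it sees inside block $\ell$, which is at least $\sum_{\ell=1}^m\mathcal{D}_{\mathscr{C}}(k_\ell)$ (with the convention $\mathcal{D}_{\mathscr{C}}(0)=0$) and equals it for an optimal choice inside each block. Hence
\[
\mathcal{D}_{\mathscr{C}^{(m)}}(K)=\min\Big\{\textstyle\sum_{\ell=1}^m\mathcal{D}_{\mathscr{C}}(k_\ell):\ 0\le k_\ell\le n,\ \textstyle\sum_{\ell=1}^m k_\ell=K\Big\}.
\]
The remaining step is to show this minimum is attained by the extreme split that fills $q$ blocks completely and puts the leftover $k$ picks into one further block (such a split exists because $K<n^{(m)}$ forces $q\le m-1$), giving $\mathcal{D}_{\mathscr{C}^{(m)}}(K)=q\,\mathcal{D}_{\mathscr{C}}(n)+\mathcal{D}_{\mathscr{C}}(k)=q\theta+\mathcal{D}_{\mathscr{C}}(k)$ via $\mathcal{D}_{\mathscr{C}}(n)=\theta$. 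Dividing by the total packet count $m\rho_{ave}\theta$ and using $\mathcal{R}_{\mathscr{C}}(k)=\mathcal{D}_{\mathscr{C}}(k)/(\rho_{ave}\theta)$ yields $\mathcal{R}_{\mathscr{C}^{(m)}}(K)=\tfrac1m\big(\tfrac{q}{\rho_{ave}}+\mathcal{R}_{\mathscr{C}}(k)\big)$, which is part~2; part~3 is then pure algebra --- subtract the part~2 expression from the part~1 expression, group the $q$-terms and the $k$-terms, and rewrite using $\mathcal{R}_{DSS}(k)=k/n$ and the factor $1-1/\rho_{ave}$.

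The hard part is the optimality of the extreme split. Unlike the \emph{maximum} union size over $k$-subsets, which is concave by a greedy argument, the \emph{minimum} union size $k\mapsto\mathcal{D}_{\mathscr{C}}(k)$ need not have non-increasing increments in full generality --- a pair of nodes sharing two packets already yields a code where ``fill full blocks first'' is not optimal. So this step needs either the diminishing-returns inequality $\mathcal{D}_{\mathscr{C}}(a+1)-\mathcal{D}_{\mathscr{C}}(a)\le\mathcal{D}_{\mathscr{C}}(b+1)-\mathcal{D}_{\mathscr{C}}(b)$ for all $a\ge b$ --- which lets one move a single pick from a less-loaded block to a fuller one without raising the objective, and iterate down to the extreme split --- or a structural assumption such as $|U_i\cap U_j|\le1$ for all $i\neq j$ (as in Remark~\ref{remark 4}), under which that inequality holds. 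Everything else is bookkeeping with the block-diagonal incidence matrix and the identity $\sum_j\rho_j=\sum_i\alpha_i=\rho_{ave}\theta$.
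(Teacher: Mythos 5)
Your decomposition is the right way to make this theorem precise, and it is already far more of a proof than the paper supplies: the paper's entire argument for Theorem \ref{rate difference on concatenated FR code} is the sentence that ``one can easily find'' the three quantities, so there is no method to compare against. Your part 1 is correct, modulo the observation (which you implicitly make by writing $q=\lfloor K/n\rfloor$) that the theorem's $\lfloor K/n^{(m)}\rfloor$ must be read as $\lfloor K/n^{(1)}\rfloor$, since $K<n^{(m)}$ would otherwise make that floor identically zero. Your identity $\mathcal{D}_{\mathscr{C}^{(m)}}(K)=\min\{\sum_{\ell}\mathcal{D}_{\mathscr{C}}(k_\ell)\}$ over splits of $K$ is also correct, precisely because the diagonal blocks carry disjoint packet symbols.

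The gap you flag in part 2 is genuine, and it is a gap in the theorem itself, not merely in your write-up. Take $n=3$ with $U_1=\{P_1\}$, $U_2=\{P_1,P_4\}$, $U_3=\{P_2,P_3\}$, a valid FR code under Definition \ref{FR code} with $\theta=4$. Then $\mathcal{D}_{\mathscr{C}}(1)=1$, $\mathcal{D}_{\mathscr{C}}(2)=2$, $\mathcal{D}_{\mathscr{C}}(3)=4$, so the increments are increasing rather than decreasing. For $m=2$ and $K=4$ (so $q=1$, $k=1$) the extreme split $(3,1)$ gives $4+1=5$, but the split $(2,2)$ gives $2+2=4$; hence $\mathcal{D}_{\mathscr{C}^{(2)}}(4)=4$ while the theorem's formula predicts $5$. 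So part 2 holds only under an added hypothesis --- exactly the diminishing-increments property $\mathcal{D}_{\mathscr{C}}(a+1)-\mathcal{D}_{\mathscr{C}}(a)\le\mathcal{D}_{\mathscr{C}}(b+1)-\mathcal{D}_{\mathscr{C}}(b)$ for $a\ge b$ that you isolate --- or must be weakened to the inequality $\mathcal{D}_{\mathscr{C}^{(m)}}(K)\le q\theta+\mathcal{D}_{\mathscr{C}}(k)$, which the extreme split always witnesses; the paper offers nothing to close this. One further discrepancy you should not gloss over: subtracting your part 2 from your part 1 gives $\frac{1}{m}\left\{q\left(1-\frac{1}{\rho_{ave}}\right)+\mathcal{R}_{DSS}(k)-\mathcal{R}_{\mathscr{C}}(k)\right\}$, which is \emph{not} the displayed part 3 (the two expressions agree only when $q=k=0$), so ``pure algebra'' cannot land on the printed formula; part 3 as stated is internally inconsistent with parts 1 and 2.
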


\begin{proof}
    For $N=nm$ and $K=n\lfloor K/n\rfloor+k$, one can easily find the DSS code rate $\mathcal{R}_{DSS}(K)$, the FR code rate $\mathcal{R}_{\mathscr{C}(K)}$ and the rate difference $\mathcal{R}_{DSS}(K) - \mathcal{R}_{\mathscr{C}'}(K)$.
\end{proof}

In an FR code $\mathscr{C}^{(m)}:(n^{(m)}=mn, \theta^{(m)}=m\theta, \alpha^{(1)}, \rho^{(m)}=m\rho)$, if the average of the replication factor to the packets is 
\[
\rho_{ave}^{(m)} = \frac{\sum_{i=1}^{n^{(m)}}\alpha^{(m)}_i}{\theta^{(m)}},
\] 
then $\rho_{ave}^{(m)}=\rho_{ave}^{(1)}$, for any $m\in\N$.

\begin{remark}
    For $r=1,2$, if two FR codes $\mathscr{C}^{(r)}:(n^{(r)}, \theta^{(r)}, \alpha^{(r)}, \rho^{(r)})$ are universally good then the concatenated FR code $\mathscr{C}:(n, \theta, \alpha, \rho)$ (see the Theorem \ref{FR code concatenation diagonal}) is also universally good, where  $n=n^{(1)}+n^{(2)}$, $\theta = \theta^{(1)} + \theta^{(2)}$, $\alpha = \min\{\alpha^{(r)}: r=1,2\}$ and $\rho = \max\{\rho^{(r)}:r=1,2\}$.
\end{remark}

\begin{remark}
    For any $m\in\N$, if the FR code $\mathscr{C}^{(1)}:(n^{(1)}, \theta^{(1)}, \alpha^{(1)}, \rho^{(1)})$ is universally good then the constructed FR code $\mathscr{C}^{(m)}:(n^{(m)}, \theta^{(m)}, \alpha^{(m)}, \rho^{(m)})$ (see the Theorem \ref{rate difference on concatenated FR code}) is also universally good, where $n^{(m)}=mn, \theta^{(m)}=m\theta, \alpha^{(m)}=\alpha^{(1)}$ and $\rho^{(m)}=m\rho$.
\end{remark}
%For a given FR code $\mathscr{C}:(n, \theta, \alpha, \rho)$ with $|U_i\cap U_j|\leq 1$ (for each $i,j\in]n[$ and $1\leq i<j\leq n$), the FR code $\mathscr{C}':(mn, m\theta, \alpha, m\rho)$ is universally good for any positive integer $m$. %If an FR code $\mathscr{C}':(mn, m\theta, \alpha, m\rho)$ is constructed by concatenating a given FR code $\mathscr{C}:(n, \theta, \alpha, \rho)$ then the rate difference depends on the factoe $m$ and code rate of the FR code $\mathscr{C}$. Hence, one can find an FR code $\mathscr{C}':(mn, m\theta, \alpha, m\rho)$ with the DSS code rate $(k/n)\approx\mathcal{R}_{\mathscr{C}(k)}$ by choosing sufficient large positive integer $m$. Again, if any two nodes have atmost one packet common in the FR code $\mathscr{C}:(n, \theta, \alpha, \rho)$ then any two nodes will have atmost one packet common in the FR code $\mathscr{C}':(mn, m\theta, \alpha, m\rho)$ and the FR code $\mathscr{C}'$ is universally good. Hence one can construct a universally good FR code $\mathscr{C}':(mn, m\theta, \alpha, m\rho)$ with the DSS code rate $(k/n)\approx\mathcal{R}_{\mathscr{C}(k)}$.

\begin{remark}
    For a positive integer $\alpha>1$, consider a matrix 
    \[
    W_{\alpha+1} = 
    \begin{bmatrix}
        \textbf{1}_\alpha & \textbf{0}_{\alpha\choose 2}  \\
         I_\alpha & W_\alpha
\end{bmatrix},
    \]
     where $\textbf{0}_m$ is an zero-array with length $m$, $\textbf{1}_m$ is an $m$-length array with each entry one, $I_m$ is an identity matrix of dimension $m$ and $W_1$ 
     = $\begin{bmatrix}
        1  \\
        1
\end{bmatrix}$ for $m>1$ and $m\in\N$.
     An FR code $\mathscr{C}:(n=\alpha +1, \theta={\alpha +1\choose 2}, \alpha, \rho = 2)$ with the NPDI Matrix $W_{\alpha+1}$ is universally good \cite{rr10}. For the FR code, the difference between the DSS code rate $\mathcal{R}_{DSS}(k)$ and the FR code rate $\mathcal{R}_\mathscr{C}(k)$ is $k(k-1)/(2n\alpha)$. Note that the difference increases with $\mathcal{O}(k^2)$. 
     \end{remark}

\section{Conclusion}
In this work, FR codes with non-uniform parameters are considered. For such FR codes, bounds on the DSS code rate, the FR code rate and the growth of the FR code rate are investigated, where the DSS code rate is the fraction of the information per node and the FR code rate is the fraction of the information per encoded packet. These bounds are also obtained for the concatenated FR codes. 
%In general, it would be an interesting future task to search for the FR code with optimal values of the conflicting code rates. 
%Bounds on the FR code rate, the growth of the FR code rate and the DSS code rate 
%It is shown that concatenation of some universally good FR codes with different packet symbols, is again a universally good. 

%\bibliographystyle{IEEEtranTCOM}
\bibliographystyle{IEEEtran}
\bibliography{cloud}

\end{document}